\documentclass[conference]{IEEEtran}
\usepackage{amsmath,amssymb,amsthm}
\usepackage{mathtools}
\usepackage{bm}
\usepackage{hyperref}
\usepackage{xcolor}
\usepackage{booktabs}
\usepackage{cite}

\newtheorem{theorem}{Theorem}
\newtheorem{lemma}{Lemma}
\newtheorem{corollary}{Corollary}
\newtheorem{definition}{Definition}
\newtheorem{proposition}{Proposition}
\newtheorem{remark}{Remark}

\begin{document}

\title{A Theoretical Framework for Stochastic Activity Prediction in Tensor Accelerator Wallace-Tree Multipliers}

\author{
\normalsize Prashanthi Metku, Chandra Gandu\\[4pt]
\normalsize\itshape Qualcomm Technologies, Inc., USA\\
\normalsize \{pmetku, chandras\}@qti.qualcomm.com\\[4pt]
}

\maketitle

\begin{abstract}
Tensor accelerator multipliers burn dynamic power on every clock cycle, even when
sparse operands require very little internal switching. No existing technique addresses
this: zero-detection requires exactly-zero operands, structural power gating requires
an idle multiplier, and offline weight selection cannot respond to runtime data.
This paper introduces \emph{Stochastic Activity Prediction} (SAP), which closes this
gap by examining the Hamming weight of arriving operands before the multiplier
executes, predicting low switching activity, and freezing the inputs when a
deterministic Safety Controller independently confirms the reuse is correct.
Mispredictions cause missed savings, never wrong answers.
Three formal results underpin SAP: (i)~a Spectral Contraction Lemma proving that
Wallace-tree activity depends on operand bit density, not bit position, establishing
Lipschitz constant $L_\varphi = 3/2$ and prediction error below $10^{-13}$ for a
256-cycle window; (ii)~an Information Retention Theorem showing $\eta_I \geq
1-O(\log n/n)$, so one bit per cycle captures nearly all predictive information
about $O(n^2)$ internal nodes; and (iii)~a Bernoulli Optimality Theorem proving the
chosen encoding is shown to be optimal, within the family of calibrated one-bit encoders
of Hamming-weight statistics considered. SAP addresses the specific layer of the tensor
accelerator power stack that existing techniques do not cover.
\end{abstract}

\vspace{4pt}\noindent\small\textbf{Index Terms---}
Wallace Tree Multiplier, Tensor Accelerators, Low-Power VLSI, Stochastic Activity
Prediction, Operand Isolation, Energy-Efficient Computing, Information Theory,
Bernoulli Optimality.\vspace{4pt}

\section{Introduction}

The rapid growth of deep learning has placed energy efficiency at the centre of
hardware design. Tensor accelerators, the silicon engines that power modern AI
inference in data centres, edge devices, and mobile platforms, achieve their
throughput by replicating Multiply--Accumulate (MAC) units at massive scale.
A contemporary TPU contains $256 \times 256 = 65{,}536$ MAC units operating in
parallel~\cite{jouppi2017,wallace1964,dadda1965}, and the dominant cost of running a neural
network layer on such a device is the dynamic power burned inside those units~\cite{sze2017}.
That power is governed by $P_{\mathrm{dyn}} \propto \alpha \cdot C_{\mathrm{eff}}
\cdot V^2 \cdot f$, where voltage and frequency are fixed by the process node,
leaving the switching activity $\alpha$ as the only quantity that varies with
the workload at runtime. Reducing $\alpha$ is therefore the primary lever available
to a power-conscious hardware designer once the chip has been fabricated.

Inside each MAC unit, the multiplication is implemented as a Wallace-tree reduction
network: a logarithmic cascade of carry-save adders that compresses $n^2$
partial products into two final carry-propagate inputs~\cite{wallace1964,dadda1965,paradhasaradhi2014}.
Tensor accelerators arrange these units in systolic arrays~\cite{kung1979}, where
operands flow through a grid of MAC units in lock-step, enabling massive parallelism
with minimal control overhead.
Every gate in this tree toggles in proportion to how many active bits (logical 1s) flow
through it each cycle. When both operands are dense, the tree works hard and
consumes its full switching budget. When both operands are sparse, most stages
barely move, yet the tree runs unconditionally and burns power regardless.
This mismatch between the information content of the inputs and the energy
expended on them is the inefficiency this work addresses.

The problem is practically important because quantized AI inference is dominated
by sparse operands. Post-training quantization~\cite{jacob2018,krishnamoorthi2018,nagel2021}
concentrates weight values near
zero: an INT8 weight of value $+2$ is represented as $00000010$: seven zeros
and one one. Weight pruning and compression techniques~\cite{han2016} push this
concentration further, leaving the majority of weights with very few active bits.
Neighbouring activation values in a convolutional feature map or
an attention window are spatially correlated and change slowly across consecutive
cycles. The result is that a large fraction of MAC operations in a typical
inference workload involve operands whose combined Hamming weight is well below
half the available bit budget, yet the Wallace-tree multiplier expends its full
switching cost on each of them.

Existing power-management techniques cannot address this regime. Reactive
isolation methods such as clock gating and operand isolation~\cite{pedram1996}
wait for a zero or idle condition to propagate into the datapath before acting,
consuming energy during the detection interval itself. Sparsity-aware architectures
including Eyeriss~\cite{chen2016}, SCNN~\cite{parashar2017}, and
SparTen~\cite{samajdar2019} skip computation only when an operand is exactly
zero, a condition that sparse-but-non-zero quantized weights do not satisfy.
Structural power gating~\cite{pisa2023,regate2025} disconnects multipliers that
have no assigned work, but a multiplier receiving operands every cycle is not
structurally idle, even if those operands carry little information.
Offline techniques such as HALO~\cite{halo2025} and layer-wise weight
selection~\cite{anon2025} identify the connection between Hamming weight and MAC
power at design time, but they run before deployment and cannot respond to the
activation statistics that are only known at inference time.
The specific regime that falls through every one of these approaches is the
\emph{active-but-sparse} multiplier: operands present, non-zero, yet
informationally thin enough that most internal switching is wasted.

This paper introduces \emph{Stochastic Activity Prediction} (SAP), a framework
that closes this gap by predicting the multiplier's forthcoming switching activity
before execution begins. SAP intercepts the incoming operands in the same cycle
they arrive and computes their combined Hamming weight, $Z_t = \mathrm{HW}(A_t)
+ \mathrm{HW}(B_t)$. This count is compressed into a single Bernoulli proxy
bit whose flip rate over a short observation window tracks the multiplier's
internal activity. When the flip rate falls below a threshold, a deterministic
Safety Controller independently verifies that the result from the previous cycle
is still valid. If confirmed, the multiplier inputs are frozen and the previous
output is forwarded: no switching, no wasted power, and an arithmetically exact
result. A misprediction by the probabilistic predictor can cause a missed saving;
it cannot cause a wrong answer, because the Safety Controller's check is
deterministic and formally decoupled from the prediction.

The choice of Bernoulli encoding for the proxy is not arbitrary. Setting the
output probability equal to the normalised input density $p_t = Z_t/(2n)$ means
a sparse operand pair produces a heavily biased coin that mostly shows 0, while
a dense pair produces a near-fair coin. The rate at which the coin flips between
consecutive cycles mirrors how much the operand density is changing, which is
precisely what drives internal compressor switching. Any alternative 1-bit
encoding either discards density information or introduces systematic bias.
Section~\ref{sec:optimality} proves this claim formally: among all one-bit
encoders of Hamming-weight statistics, the Bernoulli encoder uniquely maximises
mutual information with future switching activity.

The theoretical contributions of this paper are five. First, a Spectral
Contraction Lemma (Section~\ref{sec:infotheory}) proves that positional
perturbations to the partial-product matrix decay geometrically through
carry-save reduction stages, establishing that Wallace-tree activity depends on
bit density, not bit position, with an explicit Lipschitz constant
$L_\varphi = 3/2$. Second, an Information Retention Theorem
(Section~\ref{sec:infotheory}) bounds $\eta_I \geq 1 - O(\log n/n)$,
showing that one bit per cycle captures nearly all predictive information
about $O(n^2)$ internal nodes. Third, the Bernoulli Optimality Theorem
(Section~\ref{sec:optimality}) establishes that the encoding is optimal within the
      calibrated encoder family considered here.
Fourth, the Safety Controller (Section~\ref{sec:safety}) provides a
deterministic correctness guarantee independent of prediction accuracy.
Fifth, a five-layer power stack analysis (Section~\ref{sec:bridge}) positions
SAP within the complete tensor accelerator power landscape and proves that
its savings are orthogonal to all existing power-reduction techniques,
filling the one layer that none of them address.

\section{Related Work}
\label{sec:related}

Understanding precisely where SAP sits relative to prior work requires examining each
existing technique against the active-but-sparse regime that SAP targets.

\textbf{Reactive isolation} techniques such as clock gating and operand
isolation~\cite{pedram1996} wait for an idle or zero state to propagate into the
datapath before acting. SAP acts before the datapath sees the operands at all,
eliminating the detection latency entirely.

\textbf{Sparsity-aware architectures} such as Eyeriss~\cite{chen2016},
SCNN~\cite{parashar2017}, Cnvlutin~\cite{albericio2016}, and
SparTen~\cite{samajdar2019} skip computation when operands are identically zero.
Their savings boundary is a hard zero; they cannot exploit operands that are sparse
but non-zero. SAP operates on the statistical distribution of Hamming weights and
activates whenever predicted toggle rate falls below a threshold, which is a strictly broader
savings region.

\textbf{PTTS}~\cite{bahrebar2022} monitors sparsity windows and gates tensor core
power once sufficient sparsity has been observed. It is reactive: it waits for the
chip to exhibit sparsity before deciding to gate. SAP is predictive: it estimates
the probability of low activity from operand statistics in the current cycle, before
the computation executes.

\textbf{HALO}~\cite{halo2025} and \textbf{layer-wise weight selection}~\cite{anon2025}
share SAP's core physical intuition that Hamming weight predicts switching activity.
The critical difference is timing. Both are offline tools: HALO characterises the
weight-to-power mapping at design time; layer-wise weight selection runs during
quantization-aware training. Neither is present in silicon at inference time, neither
generates isolation signals, and neither can respond to the activation stream that
arrives only at runtime. SAP is the first work to close the online loop: Hamming-weight
statistics from both operands, computed in the same cycle as their arrival, feed a
hardware isolation signal with a deterministic correctness guarantee. The offline
insight that Hamming weight predicts power is prior art; the online cycle-accurate
feedback architecture is SAP's original contribution.

\textbf{Approximate computing}~\cite{mittal2016,alaghi2013,metku2017a,metku2017b,metku2019,metku2018}
reduces switching by tolerating arithmetic errors. SAP explicitly rejects this
trade-off: its Safety Controller guarantees exact outputs for every input, making it
suitable for precision AI workloads where approximate computing is prohibited.

\section{Activity Modeling and the SAP Proxy}

With the gap precisely located, we now construct the mechanism that fills it. The
key observation is this: the internal switching activity of a Wallace-tree multiplier
is governed primarily by the \emph{density} of active bits entering the reduction
tree, not by the specific positions of those bits. This density can be measured at
the operand inputs in $O(n)$ time using a simple popcount circuit, making it an
ideal lightweight proxy.

\subsection{The True Activity and Why It Cannot Be Measured Directly}

For an $n \times n$ Wallace-tree with internal nodes $Y^{(k)}$, the true average
switching activity over $W$ cycles is:
\begin{equation}
\alpha_{WT} = \frac{1}{|Y|(W-1)} \sum_{t=1}^{W-1} \sum_{k=1}^{|Y|}
\mathbf{1}\!\left[Y^{(k)}_t \neq Y^{(k)}_{t+1}\right]
\label{eq:awt}
\end{equation}
Evaluating~\eqref{eq:awt} requires tracking logic on every internal node, of which
there are $O(n^2)$. This is physically impractical for a runtime predictor. What
is needed is a proxy computable from the inputs alone.

\begin{definition}[Weak Bit Independence]
\label{assm:weak}
Operand bits satisfy \emph{weak independence} if their pairwise correlations satisfy
$|\delta_{ij}| \leq \rho/n$ for all $i,j$, where $\rho \geq 0$ is a data-dependent
correlation parameter. Full independence corresponds to $\rho = 0$.
\end{definition}

\subsection{Partial-Product Density as the Predictive Link}

Under Assumption~\ref{assm:weak}, partial products satisfy:
\begin{align}
E[p_{ij}] &= \Pr(a_i = 1)\cdot\Pr(b_j = 1) + \delta_{ij} \\
E[H_P] &= \textstyle\sum_{i,j} E[p_{ij}]
\end{align}
The quantity $E[H_P]$, the expected density of active logical 1s entering the
reduction tree, is a first-order predictor of $\alpha_{WT}$. Because all internal
compressor switching is triggered by these partial products, measuring $H_P$ from
the operand inputs gives a direct window onto the multiplier's forthcoming activity.

\subsection{The Bernoulli Proxy Stream}

Translating $H_P$ into a hardware-observable signal requires further compression.
For operands $A_t$ and $B_t$ arriving at cycle $t$, define:
\begin{align}
Z_t &= \mathrm{HW}(A_t) + \mathrm{HW}(B_t) \label{eq:zt}\\
p_t &= Z_t/(2n) \label{eq:pt}\\
S_t &\sim \mathrm{Bernoulli}(p_t) \label{eq:st}
\end{align}
The proxy toggle rate observed over a window $W$ is then:
\begin{equation}
\hat{\tau} = \frac{1}{W-1}\sum_{t=1}^{W-1}\mathbf{1}[S_t \neq S_{t+1}]
\label{eq:tauhat}
\end{equation}
Computing $\hat\tau$ requires only $O(1)$ logic per cycle: a popcount adder, one
comparator, and a flip-flop. This replaces monitoring $O(n^2)$ internal nodes with
monitoring a single wire. Section~\ref{sec:optimality} proves that this specific
Bernoulli encoding is shown to be optimal within the class of calibrated encoders
considered in Section~\ref{sec:optimality}, rather than merely a convenient choice.

\section{Theoretical Analysis: The Proxy Tracks the Multiplier}
\label{sec:theory}

With the proxy defined, the central theoretical question is: does $\hat\tau$ actually
track $\alpha_{WT}$? The following results answer this affirmatively with explicit,
quantified bounds.

\subsection{How Sensitive is the Mapping?}

The first step is to bound how sensitively $\alpha_{WT}$ responds to changes in
partial-product density, equivalently the amount of error in the proxy estimate that can
be tolerated. This requires bounding the derivative of the activity function $g$
that maps input density to Wallace-tree transitions.

\begin{lemma}[Compressor Transition Bound]
\label{lem:compressor}
For a full adder with fan-in $k=3$ and input density $\rho_\ell$ at stage $\ell$,
under Assumption~\ref{assm:weak} with $\rho=0$:
\begin{equation}
T_{\mathrm{FA}}(\rho_\ell) \leq k/4 = 3/4
\end{equation}
\end{lemma}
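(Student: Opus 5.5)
The plan is to read $T_{\mathrm{FA}}(\rho_\ell)$ as the probability that the full adder's output word $(S,C)$ changes between two consecutive cycles at stage $\ell$. Under Assumption~\ref{assm:weak} with $\rho=0$, I take the three input bits to be independent $\mathrm{Bernoulli}(\rho_\ell)$ within a cycle and the input triples in consecutive cycles to be independent copies. The temporal part of this is an assumption I am adding: I read $\rho=0$ as fresh, uncorrelated data each cycle, which Assumption~\ref{assm:weak} does not state explicitly. I deliberately do not add up the separate toggle probabilities of $S$ and $C$. Each of those is $2q(1-q)$, where $q$ is that output's signal probability, and at $\rho_\ell=1/2$ both sum and majority have $q=1/2$. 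Summing them would therefore give $1$, which exceeds the stated bound, so the argument has to treat the output pair jointly.

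The first step is structural. A full adder computes the binary encoding $N = S + 2C$ of $N=x_1+x_2+x_3\in\{0,1,2,3\}$. Because this encoding is a bijection, the pair $(S,C)$ changes between cycles $t$ and $t+1$ exactly when $N_t\neq N_{t+1}$. Under the independence assumptions, $N_t$ and $N_{t+1}$ are i.i.d.\ $\mathrm{Bin}(3,\rho_\ell)$ with masses $b_m=\binom{3}{m}\rho_\ell^m(1-\rho_\ell)^{3-m}$. This gives
\begin{equation*}
T_{\mathrm{FA}}(\rho_\ell)=\Pr(N_t\neq N_{t+1}) = 1-\sum_{m=0}^{3} b_m^2 .
\end{equation*}

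The second step bounds the collision probability $\sum_m b_m^2$ from below. The distribution of $N$ has support of size $k+1=4$ and $\sum_m b_m=1$, so Cauchy--Schwarz gives
\begin{equation*}
\sum_{m=0}^{3} b_m^2 \;\ge\; \frac{\bigl(\sum_{m} b_m\bigr)^2}{4}=\frac14 .
\end{equation*}
Hence $T_{\mathrm{FA}}(\rho_\ell)\le 1-\tfrac{1}{k+1}=\tfrac34=k/4$ for $k=3$, which is the claimed bound, and it holds uniformly in $\rho_\ell$.

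As a sanity check, the worst case $\rho_\ell=1/2$ gives masses $(1,3,3,1)/8$, so $\sum_m b_m^2=20/64$ and $T_{\mathrm{FA}}=11/16<3/4$. The bound is therefore not attained, because a binomial law is never uniform on $\{0,\dots,3\}$.

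The main obstacle is the modelling step, not the inequality. One must justify measuring transitions at the level of the output word $(S,C)$ rather than summing per-wire toggles, and one must state explicitly that $\rho=0$ also covers independence across cycles. If a per-wire reading were insisted upon, the fallback would be to normalise by the two outputs. That gives an average of at most $\tfrac12\cdot 2\cdot\tfrac12=\tfrac12\le\tfrac34$, which is still consistent with the lemma but looser in spirit.
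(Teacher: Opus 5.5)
Your proof is correct under the model you state, and it takes a genuinely different route from the paper's.

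\textbf{The paper's route.} The paper argues generically. It gives each of the $k$ inputs a toggle probability $\rho_\ell(1-\rho_\ell)$. It then bounds the output transition probability of an arbitrary Boolean function of $k$ independent inputs by the sum of input influences, giving $T_{\mathrm{FA}}\le k\rho_\ell(1-\rho_\ell)\le k/4$. This buys three things:
\begin{itemize}
\item generality in the gate and the fan-in;
\item a bound that vanishes as $\rho_\ell\to 0$;
\item a per-input, influence-style statement, which is the form the paper reuses in Lemma~\ref{lem:spectral}.
\end{itemize}

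\textbf{Your route.} You exploit the arithmetic structure of the full adder. Since $(S,C)$ is the binary code of $N=x_1+x_2+x_3$, the word-level transition probability is exactly $1-\sum_m b_m^2$. Cauchy--Schwarz on a four-point support then gives $3/4$, and the true maximum is $11/16$. This buys exactness, a sharper constant, and an explicit resolution of the two-output issue.

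It is also robust to the toggle convention. Suppose each input changes with probability $2\rho_\ell(1-\rho_\ell)$, the two-sided convention the paper itself uses for $E[\hat\tau]=2p(1-p)$ in Theorem~\ref{thm:monotone}. Then the union-bound route only yields $6\rho_\ell(1-\rho_\ell)\le 3/2$. So the paper's constant effectively relies on counting a single transition direction, whereas your computation reaches $3/4$ in the two-sided model.

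\textbf{What your route costs.} Your bound $1-1/(k+1)$ coincides with $k/4$ only at $k=3$, and it is tied to counter-like gates. The uniform bound also hides the low-density decay, although your exact formula recovers it via $1-\sum_m b_m^2\le 1-(1-\rho_\ell)^6\le 6\rho_\ell$.

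The cross-cycle independence you flag as an added assumption is equally implicit in the paper's proof. That proof posits per-input toggle probabilities that Definition~\ref{assm:weak} does not provide.
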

\begin{proof}
Under $\rho=0$ each input toggles independently with probability $\rho_\ell(1-\rho_\ell)$.
For any Boolean function $f$ of $k$ independent inputs, the output transition
probability is bounded by the sum of input influences~\cite{odonnell2014}:
$T_{\mathrm{FA}} \leq k\cdot\rho_\ell(1-\rho_\ell) \leq k/4$,
since $\rho(1-\rho)$ is maximised at $\rho=1/2$.
\end{proof}

\begin{proposition}[Lipschitz Constant]
\label{prop:cg}
For a balanced Wallace-tree with $O(n^2)$ FA compressors:
$|g'(H_P)| \leq C_g = 3/4$ for all $H_P \in [0,n^2]$,
giving Lipschitz constant $L_\varphi = 2C_g = 3/2$.
\end{proposition}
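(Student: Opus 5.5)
The plan is to make $g$ concrete as a normalised activity, then bound its derivative by pushing Lemma~\ref{lem:compressor} through the tree with the chain rule. I would work with the normalised density $\rho_0 = H_P/n^2 \in [0,1]$ entering stage $0$. I would then define $g(H_P)$ as the expected fraction of FA-internal nodes that toggle, which is the average of $T_{\mathrm{FA}}(\rho_\ell)$ over all compressors. Because $g$ is an average over the $O(n^2)$ compressors, the count $|Y|$ cancels. Any uniform per-compressor bound therefore transfers directly to $g$, with no factor growing in $n$. This normalisation is what lets the statement give an $n$-independent constant.

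The derivative bound would then go stage by stage. For a full adder with independent inputs of density $\rho$, the sum output has density $s(\rho) = 3\rho(1-\rho)^2 + \rho^3$. The carry output has density $c(\rho) = 3\rho^2(1-\rho) + \rho^3$. I would write $\rho_{\ell+1}$ as a convex combination of these two maps and check that $|d\rho_{\ell+1}/d\rho_\ell|$ stays bounded on $[0,1]$.

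Next, differentiating the input toggle probability $k\rho_\ell(1-\rho_\ell)$ from the proof of Lemma~\ref{lem:compressor} gives $k(1-2\rho_\ell)$. Combining this with the chain rule across the $O(\log n)$ stages, and averaging with weights proportional to the number of compressors per stage, gives a bound on $|g'|$. The weights decay geometrically by the $3\!:\!2$ reduction, which is the contraction the later Spectral Contraction Lemma relies on. Finally I would rescale from $\rho_0$ back to $H_P$ and check that the result is at most $C_g = k/4 = 3/4$.

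The Lipschitz statement then follows from the mean value theorem: $|g(x)-g(y)| \le C_g|x-y|$. The proxy $\varphi$ depends on the density through the toggle rate of consecutive samples. A perturbation of the density therefore enters twice, once through $p_t$ and once through $p_{t+1}$. That is where the factor $2$ comes from, giving $L_\varphi = 2C_g = 3/2$.

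The main obstacle is the step that turns Lemma~\ref{lem:compressor}, which bounds the \emph{value} $T_{\mathrm{FA}} \le 3/4$, into a bound on the \emph{derivative} with respect to $H_P$. The naive derivative $k(1-2\rho)$ reaches $3$ at the endpoints, not $3/4$. To get $3/4$ I would first try to exploit the $1/n^2$ normalisation together with the averaging over stages, arguing that endpoint slopes are damped by the contraction of the density maps near $0$ and $1$. If that fails, I would fall back on an influence-based argument in the style of \cite{odonnell2014}: bound the sensitivity of the averaged transition count to a single additional partial-product one by the total influence $k/4$ per affected compressor. This may require reading $C_g$ as a per-unit-density bound rather than a literal derivative.
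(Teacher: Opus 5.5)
Your diagnosis of the obstacle is right, but the proposal never gets past it, and the first remedy runs the wrong way. With independent inputs, one stage maps density by $F(\rho)=\tfrac12\bigl(s(\rho)+c(\rho)\bigr)=\tfrac12(3\rho-3\rho^2+2\rho^3)$. Its slope is $3/2$ at $\rho=0$ and at $\rho=1$, and drops to $3/4$ only at the fixed point $\rho=1/2$. So near the endpoints the density maps \emph{expand}: the number of ones is conserved to first order while the bit count shrinks by $2/3$. Near $\rho_0=0$ this gives $d\rho_\ell/d\rho_0\approx(3/2)^\ell$, which exactly cancels the $(2/3)^\ell$ decay of your stage weights. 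Your own bookkeeping then gives $g'(0)\approx\sum_{\ell<L}\tfrac13(2/3)^\ell\cdot 3\cdot(3/2)^\ell=L=\Theta(\log n)$. Stage $0$ alone holds about a third of the compressors with slope $k(1-2\rho)=3$, so it already contributes $1>3/4$. The exact i.i.d.\ toggle rates $2s(1-s)$ and $2c(1-c)$ give the same per-stage contribution, so this is not an artefact of the $k\rho(1-\rho)$ surrogate. The influence fallback fails for the same reason. The sum output is a $3$-input parity whose inputs are always pivotal, so its total influence is $3$, not $k/4$; the latter is only $k\max_\rho\rho(1-\rho)$ from the union bound in Lemma~\ref{lem:compressor}. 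At low density, one extra partial-product one therefore rides through all $L$ stages as a sum bit. Per unit of normalised density, which is what $\varphi=g\circ f^{-1}$ needs, the $3/4$ bound cannot come out of this argument, and in this model it actually fails near $\rho_0=0$. Read literally in raw $H_P$ units the inequality is trivially true (the slope is $O(\log n/n^2)$). But then the chain rule to $\varphi$ picks up the derivative of $H_P=\mu^2n^2$ with respect to $E[\hat\tau]$, which is of order $n^2$, so it does not yield $L_\varphi=3/2$.

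The paper's proof does not supply the missing idea; it is essentially your fallback. It says a unit increase in $H_P$ raises density by $1/n^2$, then that ``by Lemma~\ref{lem:compressor} each FA contributes at most $(3/4)/n^2$ additional transitions'', then sums over FAs and normalises by $|Y|$. That step reads the value bound $T_{\mathrm{FA}}\le 3/4$ as a slope bound, which is exactly the conflation you flagged. Your factor $2$ has a separate problem. The observation that density enters through both $p_t$ and $p_{t+1}$ shows that the \emph{forward} map $\mu\mapsto E[\hat\tau]=2\mu(1-\mu)$ is $2$-Lipschitz; this is the same $|h'|\le 2$ used in Theorem~\ref{thm:monotone}. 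But $\varphi=g\circ f^{-1}$ needs the slope of the \emph{inverse}, $1/h'(\mu)=1/\bigl(2(1-2\mu)\bigr)$. That slope is at most $2$ only for $\mu\le 3/8$ and is unbounded as $\mu\to 1/2$. So the factor $2$ requires a lower bound $|h'|\ge 1/2$ on a restricted density range, not the upper bound you derive.
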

\begin{proof}
A unit increase in $H_P$ raises the input-layer density by $1/n^2$.
By Lemma~\ref{lem:compressor} each FA contributes at most $(3/4)/n^2$ additional
transitions. Summing over $O(n^2)$ FAs and normalising by $|Y|=O(n^2)$ gives
$|g'|\leq 3/4$.
\end{proof}

\subsection{Does the Proxy Track the Activity?}

Having bounded the sensitivity, we can now formally connect the proxy $\hat\tau$
to the true activity $\alpha_{WT}$.

\begin{theorem}[Monotone Lipschitz Mapping]
\label{thm:monotone}
Under Assumption~\ref{assm:weak} with $\rho=0$ and local stationarity, there exists
an explicitly constructible monotone Lipschitz function $\varphi:[0,1]\to[0,1]$ with
$L_\varphi = 3/2$ such that:
\begin{equation}
E[\alpha_{WT}] = \varphi(E[\hat{\tau}]) + \epsilon_{\mathrm{bias}}
\end{equation}
\end{theorem}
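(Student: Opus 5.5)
The plan is to construct $\varphi$ as a composition of two maps: one from the proxy toggle rate back to the mean input density, and one from that density to Wallace-tree activity. Under local stationarity, I take the operand-bit probabilities constant over the window, so $p_t$ has a fixed mean $\bar p$. I also assume the Bernoulli draws $S_t$ are conditionally independent given $p_t$. Then $\Pr(S_t\neq S_{t+1}) = p_t(1-p_{t+1}) + p_{t+1}(1-p_t)$, and taking expectations under $\rho=0$ gives
\begin{equation*}
E[\hat\tau] = 2\bar p(1-\bar p) + \epsilon_1 ,
\end{equation*}
where $\epsilon_1$ collects the fluctuation term $-2\,\mathrm{Cov}(p_t,p_{t+1})$, which is $O(1/n)$ because $p_t$ is an average of $2n$ bits. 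The map $\psi(p)=2p(1-p)$ is monotone on $[0,1/2]$. In the sparse regime $\bar p\le 1/2$ I invert it, $\bar p = \psi^{-1}(E[\hat\tau])$. For $\bar p>1/2$ I use the bit-complement symmetry of the toggle statistics, since a dense operand and its complement produce the same flip pattern, to fold the density back onto $[0,1/2]$.

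Next I relate density to activity. Under $\rho=0$ one has $E[H_P]=\sum_{i,j}\Pr(a_i=1)\Pr(b_j=1)$, which is $n^2\bar p^2$ to first order in the symmetric stationary case. I then define $g$ as the expected normalised transition count of the tree as a function of $H_P$, obtained by propagating densities stage by stage through the full adders. Setting
\begin{equation*}
\varphi(x) = g\bigl(n^2(\psi^{-1}(x))^2\bigr)
\end{equation*}
gives an explicit construction. Monotonicity of $\varphi$ then follows from three facts on the sparse branch: $\psi^{-1}$ is increasing, $p\mapsto n^2p^2$ is increasing, and $g$ is monotone. I would justify the monotonicity of $g$ by noting that every compressor input toggle probability $\rho_\ell(1-\rho_\ell)$ increases in $\rho_\ell$ for $\rho_\ell\le 1/2$.

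For the Lipschitz constant I invoke Proposition~\ref{prop:cg}, $|g'|\le C_g=3/4$, on the normalised density scale. I then bound the derivative of the density reparametrisation by $2$, which gives $L_\varphi\le 2C_g=3/2$. This matches the constant already quoted in Proposition~\ref{prop:cg}, so I would adopt that normalisation, measuring density as $H_P/n^2$ with the factor of $2$ coming from the two operands. All remaining discrepancies go into $\epsilon_{\mathrm{bias}}$. These are three terms: $\epsilon_1$; the gap between $E[g(H_P)]$ and $g(E[H_P])$, which is $O(\mathrm{Var}(H_P)/n^4)=O(1/n^2)$ by Lipschitz continuity and concentration of $H_P$; and the first-order approximation error in $g$ coming from reconvergent correlations inside the tree.

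The main obstacle is the Lipschitz step. The inverse $\psi^{-1}$ has unbounded derivative near $x=1/2$, where $\psi'(1/2)=0$, so the factor $2$ cannot hold uniformly through this route. I would first try to bypass $\psi^{-1}$ by parametrising $g$ directly by the toggle probability $\rho_\ell(1-\rho_\ell)$, which is exactly the quantity Lemma~\ref{lem:compressor} bounds. Under that parametrisation, activity is an increasing function of the input toggle rate with slope at most $3/4$ per unit. The conversion from $E[\hat\tau]=2\bar p(1-\bar p)$ then contributes the factor $2$ cleanly, with no inversion needed. If this does not close the gap, I would restrict $\varphi$ to the operating region $\bar p\le 1/2-\delta$ and absorb the boundary behaviour into $\epsilon_{\mathrm{bias}}$.
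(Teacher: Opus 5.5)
\textbf{Verdict: the proposal has a genuine gap.} Your construction is the paper's: $\varphi=g\circ f^{-1}$, where $f=h$ is your $\psi(p)=2p(1-p)$, inverted on the sparse branch as $\mu=(1-\sqrt{1-2x})/2$, with monotonicity coming from composing increasing maps. The paper closes the Lipschitz step by multiplying $|g'|\le 3/4$ by $|h'|\le 2$. That bounds the derivative of the forward map, whereas the chain rule for $g\circ f^{-1}$ needs $|(f^{-1})'(x)|=1/(2\sqrt{1-2x})=1/(2-4p)$. This is at most $2$ only for $x\le 15/32$ (i.e.\ $p\le 3/8$) and diverges as $x\to 1/2$. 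So the obstacle you flag is real, and the paper's proof does not get past it. Your proposal leaves it open too: neither repair gives $L_\varphi=3/2$ on the stated domain.

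\textbf{Repair (1)} fails because the singularity is intrinsic, not an artefact of parametrisation. Let $A(\bar p)$ be expected activity as a function of the common density. Any $\varphi$ with $E[\alpha_{WT}]=\varphi(E[\hat\tau])$ up to small error must track $A\circ\psi^{-1}$, whose slope is $A'(\bar p)/(2-4\bar p)$. That slope stays bounded near $x=1/2$ only if $A'(1/2)=0$. There is no reason for this, since the tree is driven by partial products of density $\bar p^2$, which is not stationary at $\bar p=1/2$. Concretely, the partial-product toggle rate $2\bar p^2(1-\bar p^2)$, viewed as a function of $x=2\bar p(1-\bar p)$, has slope $2\bar p(1-2\bar p^2)/(1-2\bar p)\to\infty$. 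Two further problems with this repair:
\begin{itemize}
\item Lemma~\ref{lem:compressor} is a pointwise bound $T_{\mathrm{FA}}\le 3\rho_\ell(1-\rho_\ell)$, not a slope bound, so ``slope at most $3/4$ per unit toggle rate'' has no support.
\item Writing $\bar p(1-\bar p)=x/2$ would contribute a factor $1/2$, not $2$. The mismatch comes from identifying the tree's input density with $\bar p$ rather than $\bar p^2$.
\end{itemize}

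\textbf{Repair (2)} is sound but proves a weaker theorem. On $[0,x_0]$ with $x_0<1/2$ you get $L_\varphi\le 3/(8\sqrt{1-2x_0})$ in the paper's bookkeeping, which is $\le 3/2$ only for $x_0\le 15/32$. Extending $\varphi$ beyond $x_0$ forces $\epsilon_{\mathrm{bias}}=\Theta(1)$ there. Because the theorem never quantifies $\epsilon_{\mathrm{bias}}$, absorbing that error makes the identity vacuous rather than proving it.

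\textbf{The fold step is wrong and should be dropped.} Complementing both operands preserves $E[\hat\tau]$ but sends the partial-product density $\bar p^2$ to $(1-\bar p)^2$. So on the dense branch $E[\alpha_{WT}]$ is not a function of $E[\hat\tau]$ at all. You must restrict to the sparse branch; since $E[\hat\tau]\le 1/2$, the effective domain is $[0,1/2]$ anyway.

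\textbf{Minor:} Lipschitz continuity alone bounds your Jensen-gap term only by $O(\sqrt{\mathrm{Var}(H_P)}/n^2)$. With $\mathrm{Var}(H_P)=\Theta(n^3)$, that term is $O(n^{-1/2})$, not $O(1/n^2)$.
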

\begin{proof}
Under $\rho=0$: $E[H_P]=\mu_A\mu_B n^2$.
Under local stationarity ($p_t\approx p$): $E[\hat\tau]=2p(1-p)\triangleq h(p)$,
strictly increasing on $[0,1/2]$ with $|h'|\leq 2$.
In the symmetric regime $\mu_A\approx\mu_B=\mu$: $\mu=(1-\sqrt{1-2E[\hat\tau]})/2$,
making $f^{-1}$ well-defined. Define $\varphi=g\circ f^{-1}$. Monotonicity follows
because $f^{-1}$ is increasing and $g$ is non-decreasing. The Lipschitz chain rule
gives $L_\varphi\leq(3/4)\cdot 2=3/2$.
\end{proof}

In practical terms, Theorem~\ref{thm:monotone} establishes that $\hat\tau$ and
$\alpha_{WT}$ move together in a controlled, bounded way: when the proxy says the
multiplier is about to work lightly, the multiplier genuinely is about to work
lightly. But how accurately does $\hat\tau$ estimate $E[\hat\tau]$ from finite
observations? Theorem~\ref{thm:concentration} answers this.

\begin{theorem}[Concentration Bound]
\label{thm:concentration}
For observation window $W$ and any $\delta>0$:
\begin{equation}
\Pr\!\left(|\epsilon|\geq\tfrac{3}{2}\delta\right)\leq 2\exp\!\left(-2(W-1)\delta^2\right)
\end{equation}
\end{theorem}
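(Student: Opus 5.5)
The plan is to read $\epsilon$ as the estimation error induced on the activity scale, $\epsilon = \varphi(\hat\tau) - \varphi(E[\hat\tau])$, and to reduce the claim to a Hoeffding bound on the proxy toggle rate $\hat\tau$ of \eqref{eq:tauhat}. The factor $3/2$ then comes directly from the Lipschitz constant $L_\varphi = 3/2$ of Theorem~\ref{thm:monotone}. Since $|\epsilon| \le L_\varphi\,|\hat\tau - E[\hat\tau]|$, the event $\{|\epsilon| \ge \tfrac32\delta\}$ is contained in $\{|\hat\tau - E[\hat\tau]| \ge \delta\}$. It therefore suffices to show
\begin{equation*}
\Pr\bigl(|\hat\tau - E[\hat\tau]| \ge \delta\bigr) \le 2\exp\bigl(-2(W-1)\delta^2\bigr).
\end{equation*}

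First I will write $\hat\tau = \frac{1}{W-1}\sum_{t=1}^{W-1} X_t$ with $X_t = \mathbf{1}[S_t \neq S_{t+1}] \in [0,1]$. If the $X_t$ were independent, Hoeffding's inequality with range $1$ per term and $W-1$ terms would give exactly the stated bound. The difficulty is that consecutive indicators share a variable: $X_t$ and $X_{t+1}$ both depend on $S_{t+1}$. So the $X_t$ are not independent, even though the $S_t$ are independent Bernoulli$(p_t)$ draws given the operand stream under Assumption~\ref{assm:weak} with $\rho=0$.

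To handle this, I will condition on the operand sequence, so that the $S_t$ are independent with fixed $p_t$. I will then view $\hat\tau$ as a function $F(S_1,\dots,S_W)$ and apply McDiarmid's bounded-differences inequality. Changing one $S_t$ alters at most two indicators, so the bounded differences are $c_t \le 2/(W-1)$ for interior $t$ and $c_t \le 1/(W-1)$ for the endpoints. This gives
\begin{equation*}
\sum_t c_t^2 \approx \frac{4}{W-1},
\end{equation*}
which yields $2\exp(-(W-1)\delta^2/2)$. That is weaker than the claim by a constant in the exponent.

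This constant-matching is the main obstacle. To recover the sharper form, I would first try a block decomposition: split the sum into even-indexed and odd-indexed indicators. Each subfamily is independent, since the pairs $(S_t,S_{t+1})$ within it are disjoint. I would then apply Hoeffding to each half and combine the two via convexity of the exponential moment (Jensen on the averaged moment generating function). If this still loses a constant, I will instead state the result under the paper's modelling convention that successive proxy transitions are treated as independent trials; local stationarity makes each $X_t$ Bernoulli$(2p(1-p))$. Under that convention, plain Hoeffding gives exactly $2\exp(-2(W-1)\delta^2)$.

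Finally, I will remove the conditioning on the operand sequence by averaging, and compose with the Lipschitz containment from the first step. Plugging in $W=256$ with the paper's chosen $\delta$ gives the quoted $10^{-13}$ figure.
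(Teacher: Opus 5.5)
Your final route, the containment $\{|\epsilon|\ge\tfrac32\delta\}\subseteq\{|\hat\tau-E[\hat\tau]|\ge\delta\}$ followed by Hoeffding on $X_t=\mathbf{1}[S_t\neq S_{t+1}]$ treated as independent, is exactly the paper's proof. The paper simply declares $\{X_t\}$ a bounded i.i.d.\ sequence. You are right that the $X_t$ are not independent, and this is a genuine gap in both arguments.

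It is also not a constant you can recover with a cleverer decomposition, because the reduced inequality $\Pr(|\hat\tau-E[\hat\tau]|\ge\delta)\le 2\exp(-2(W-1)\delta^2)$ is itself false for a wide range of $p$. Take $S_t$ i.i.d.\ Bernoulli$(p)$ and $u=p(1-p)$. Then:
\begin{itemize}
\item $E[X_tX_{t+1}]=u$, so $\mathrm{Cov}(X_t,X_{t+1})=u(1-4u)\ge 0$, while $X_t$ and $X_{t+2}$ are independent.
\item Hence $(W-1)\,\mathrm{Var}(\hat\tau)\to 2u(1-2u)+2u(1-4u)=4u-12u^2$.
\item This exceeds Hoeffding's per-term variance proxy $1/4$ whenever $1/12<u<1/4$. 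It reaches $1/3$ at $u=1/6$ ($p\approx 0.21$), which is squarely the sparse regime SAP targets.
\item By the CLT for this $1$-dependent sequence, with $\delta=x/\sqrt{W-1}$ and $W\to\infty$, the left side at $u=1/6$ tends to $2\bar\Phi(\sqrt3\,x)$. At $x=3$ this is about $2\times10^{-7}$, while the claimed bound is $2e^{-18}\approx 3\times 10^{-8}$.
\end{itemize}
More generally, no bound of the form $2\exp(-c(W-1)\delta^2)$ that holds uniformly in $p$ can have $c>3/2$. The stated constant is provably valid at $p=1/2$, where XORs of i.i.d.\ fair bits are again i.i.d.\ fair bits, but not across the range of $p$ the theorem is meant to cover. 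So your fallback is not a presentational convention. It is an extra hypothesis, independence of successive transitions, which the theorem must state explicitly and which the paper's own model violates for $p\notin\{0,\tfrac12,1\}$.

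Without that hypothesis, what you can prove is what your other attempts give. The even/odd split, combined by convexity of the moment generating function, yields $2\exp(-2\lfloor (W-1)/2\rfloor\delta^2)\approx 2\exp(-(W-1)\delta^2)$. That is a factor $2$ better in the exponent than your McDiarmid bound, and within a factor $3/2$ of the best possible uniform exponent.

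Your closing sentence is wrong. With $W=256$ and $\delta=0.05$ the right-hand side is $2e^{-1.275}\approx 0.56$, not $10^{-13}$. Reaching $10^{-13}$ needs $\delta\approx 0.245$ at $W=256$, or $W\approx 6{,}100$ at $\delta=0.05$, so the paper's figure does not follow from its own bound.

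Finally, your containment step inherits $L_\varphi=3/2$ from Theorem~\ref{thm:monotone}. That proof multiplies by $\sup|h'|=2$, but what is needed is the Lipschitz constant of $h^{-1}(\tau)=(1-\sqrt{1-2\tau})/2$. Its derivative $1/(2\sqrt{1-2\tau})$ is unbounded as $\tau\to 1/2$, so the containment is justified only on a range of $\tau$ bounded away from $1/2$.
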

\begin{proof}
By Hoeffding's inequality~\cite{hoeffding1963} on the bounded i.i.d.\ sequence
$\{\mathbf{1}[S_t\neq S_{t+1}]\}\in\{0,1\}$, combined with the Lipschitz property
$L_\varphi=3/2$ of Theorem~\ref{thm:monotone}.
\end{proof}

For $W=256$ and $\delta=0.05$: $\Pr(|\epsilon|\geq 0.075)<10^{-13}$.
A hardware engineer watching one wire for 256 clock cycles has a prediction whose
error exceeds 7.5\% with probability that is astronomically small. This is the
practical payoff of the Lipschitz bound: small errors in estimating $E[\hat\tau]$
produce proportionally small errors in the predicted $\alpha_{WT}$.

\section{Information-Theoretic Analysis: How Much Does One Bit Retain?}
\label{sec:infotheory}

Theorems~\ref{thm:monotone} and~\ref{thm:concentration} show that the proxy
\emph{tracks} true activity accurately. A separate question is how much
\emph{information} the proxy retains about future switching, compared to observing
the full $n$-bit operands directly. This is what the information-theoretic analysis
quantifies.

Let $A$ denote raw operand statistics, $Y$ the Wallace-tree activity, and $S$ the
1-bit proxy. By the Data Processing Inequality~\cite{cover2006}: $I(S;Y)\leq I(A;Y)$.
The information-retention efficiency is $\eta_I = I(S;Y)/I(A;Y)$.

\subsection{Proving That Position Information Is Irrelevant}

The central claim is that $\alpha_{WT}$ depends on operand Hamming weights, not on
which specific bit positions are set. Proving this rigorously, rather than asserting it, requires
the following result.

\begin{lemma}[Spectral Contraction]
\label{lem:spectral}
For a balanced Wallace-tree with FA fan-in $k=3$, the positional perturbation
$\Delta\mathbf{v}^{(\ell)}$ from the density-only component satisfies at each stage:
\begin{equation}
\|\Delta\mathbf{v}^{(\ell+1)}\|_1 \leq \tfrac{3}{4}\|\Delta\mathbf{v}^{(\ell)}\|_1
\end{equation}
\end{lemma}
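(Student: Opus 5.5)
The plan is to model each carry-save stage as a linear(ized) operator acting on the vector of column densities, and to show that this operator contracts the positional (non-constant) component by a factor of $3/4$ in $\ell_1$. Write $\mathbf{v}^{(\ell)}$ for the vector of per-wire one-probabilities entering stage $\ell$. Decompose it as $\mathbf{v}^{(\ell)} = \bar v^{(\ell)}\mathbf{1} + \Delta\mathbf{v}^{(\ell)}$, where $\bar v^{(\ell)}$ is the density-only component (the average over wires feeding full adders at that stage) and $\Delta\mathbf{v}^{(\ell)}$ is the positional perturbation, with zero mean. A balanced Wallace tree partitions the wires of stage $\ell$ into disjoint triples. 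Each triple feeds one full adder, whose sum and carry outputs form the entries of $\mathbf{v}^{(\ell+1)}$. The sum output goes to the same column and the carry output to the next column. The statement then reduces to a per-adder estimate, summed over the disjoint triples.

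The key per-adder step uses the setting of Lemma~\ref{lem:compressor} with $\rho=0$, so that the three inputs are independent. The sum and carry output probabilities are then multilinear polynomials in the input densities $(x_1,x_2,x_3)$: parity for the sum, majority for the carry. By multilinearity, perturbing the inputs from the common mean $\bar v$ changes each output by at most $\sum_i |\partial_i f|\,|\Delta x_i|$, where the partial derivatives are evaluated along the segment. For any Boolean $f$, $\partial_i \Pr[f=1]$ is a difference of two conditional probabilities, namely the signed influence. This is the influence bound from~\cite{odonnell2014} already used in Lemma~\ref{lem:compressor}. I would bound the partials for both parity and majority as products of factors of the form $(1-2x_j)$ or $x_j(1-x_j)$-type terms, and then use $x(1-x)\le 1/4$ to obtain a combined per-input sensitivity weight of at most $1/4$. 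Since there are $k=3$ inputs per adder, the total output deviation is at most $\tfrac{k}{4}=\tfrac34$ times the input $\ell_1$ perturbation. This mirrors the constant in Lemma~\ref{lem:compressor}.

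Finally, I would sum over the disjoint triples. Wires not consumed at stage $\ell$ pass through unchanged, and their contribution would have to be absorbed by noting that a balanced tree reduces height by the factor $2/3$, so the pass-through wires are handled in the next stage. I would also check that recentring around the new mean $\bar v^{(\ell+1)}$ does not increase the $\ell_1$ norm of the perturbation. To do this I would use the fact that subtracting the mean at most doubles the norm, absorbed by defining $\Delta$ relative to the image of the density-only vector rather than the fresh average.

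The main obstacle is the per-input sensitivity constant. For parity, $\partial_1\Pr[\text{sum}=1]=(1-2x_2)(1-2x_3)$, which equals $1$ near $x=0$, not $1/4$. So the $3/4$ factor cannot hold uniformly in the sparse regime via pure derivative bounds. My first attempt would be to measure the perturbation in the toggle (transition) domain, $x(1-x)$, as Lemma~\ref{lem:compressor} implicitly does. If that still fails, I would restrict to the near-balanced density regime around $1/2$ and state the contraction there, with an explicit remainder term otherwise.
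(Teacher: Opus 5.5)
Your proposal has a genuine gap, and it is the one you flag yourself: the per-adder constant. Because $\ell_1$ norms add over the three inputs, a per-input transmission weight $w$ already gives contraction $w$, with no extra factor $k$. So you only need $w\le 3/4$, which is weaker than the $1/4$ you aim for. Even this weaker target fails. By the influence computation you began, an input's perturbation reaches the (sum, carry) pair with $\ell_1$ factor $|1-2q|+q$, where $q=x_2+x_3-2x_2x_3$ is the probability that the other two inputs disagree. This factor is $1/2$ at $q=1/2$ and is at most $3/4$ only for $q\in[1/4,7/12]$. It tends to $1$ as all densities tend to $0$, and it equals $2$ at $(x_2,x_3)=(1,0)$.

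Nor is this an artefact of derivative bounds: in your own formalization the inequality itself fails for sparse inputs. Put every wire at density $\bar v$. Add $+\eta$ on the three inputs of one adder and $-\eta$ on those of another, so the density-only component is unchanged. Measure the output perturbation from the image of $\bar v\mathbf 1$. Let $s(x)=3x-6x^2+4x^3$ and $c(x)=3x^2-2x^3$ be the sum and carry densities. Then the output-to-input $\ell_1$ ratio is, to first order in $\eta$,
\[
\frac{s'(\bar v)+c'(\bar v)}{3}=1-2\bar v+2\bar v^{2},
\]
which exceeds $3/4$ for $\bar v<(2-\sqrt2)/4\approx 0.15$, i.e.\ precisely on the operands SAP targets. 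The toggle-domain fallback does not rescue this: $x(1-x)$ agrees with $x$ to first order near $0$, and the same ratio still tends to $1$ as $\bar v\to 0$.

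The surrounding bookkeeping also breaks:
\begin{itemize}
\item Measured from a constant vector, even a zero perturbation maps to a nonzero one, because $s(\bar v)\neq c(\bar v)$ unless $\bar v\in\{0,\tfrac12,1\}$.
\item Measured from the image of $\bar v\mathbf 1$, the next stage's reference is no longer of the form $\bar v\mathbf 1$, so your decomposition does not iterate.
\item Recentring at the cost of a factor $2$ turns $3/4$ into $3/2$.
\item A zero-sum perturbation carried on pass-through wires is transmitted with factor exactly $1$.
\end{itemize}

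The paper's proof does not supply the missing idea; it uses a different and coarser count. It posits that carry-save reduction maps column $j$ to column $\lceil j/2\rceil$. Each perturbation then spreads to two output columns, attenuated by the $3/4$ of Lemma~\ref{lem:compressor}. Normalising by the $2/3$ reduction factor gives $(3/4\cdot 2)/3\cdot(3/2)=3/4$, which the paper identifies with the spectral radius of an unspecified stage operator $T_\ell$ off the density-only subspace. This route has three problems:
\begin{itemize}
\item It treats Lemma~\ref{lem:compressor}'s bound on the output \emph{toggle probability}, $T_{\mathrm{FA}}\le k\rho_\ell(1-\rho_\ell)$, as if it were a Lipschitz bound of output density in input density. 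That is exactly the distinction your parity computation exposes.
\item Its column map is not how carry-save reduction acts. Your model, with sums kept in column $j$, carries sent to column $j+1$ and column heights shrinking by about $2/3$, is the correct one.
\item A spectral-radius bound would not give a one-step $\ell_1$ contraction in any case.
\end{itemize}
So following the paper will not close your gap.

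What your method can honestly deliver is your second fallback, a local lemma. If all wire densities lie within $1/\sqrt{24}\approx 0.2$ of $1/2$, then $q\in[5/12,7/12]$ along the whole segment from $\bar v\mathbf 1$, so the per-adder factor is at most $3/4$. You would still need a stage-consistent reference vector and a separate treatment of pass-through wires. Such a local statement does not reach the sparse regime in which the Decoupling Lemma (Lemma~\ref{lem:decouple}) subsequently applies it.
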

\begin{proof}
Carry-save reduction maps column $j$ to column $\lceil j/2\rceil$. A positional
perturbation $\Delta v_j$ spreads to two output columns, each attenuated by the
Boolean influence bound from Lemma~\ref{lem:compressor}. After normalising by the
column-count reduction factor $2/3$: the net $\ell_1$ contraction factor is
$(3/4\cdot 2)/3\cdot(3/2)=3/4$, which equals the spectral radius of $T_\ell$
restricted to the subspace orthogonal to the density-only component.
\end{proof}

\begin{lemma}[Decoupling Lemma]
\label{lem:decouple}
Under Assumption~\ref{assm:weak} with correlation $\rho\geq 0$:
\begin{equation}
\alpha_{WT} = \alpha_{WT}^{\mathrm{sym}}(\mathrm{HW}(A),\mathrm{HW}(B)) + \Delta_\rho,
\quad |\Delta_\rho|\leq\rho/n
\end{equation}
\end{lemma}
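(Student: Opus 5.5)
The plan is to split $\alpha_{WT}$ into a part that sees only the operand bit densities and a correction that comes from the pairwise correlations $\delta_{ij}$. Then bound the correction using the geometric decay of positional structure from Lemma~\ref{lem:spectral}.

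\textbf{Step 1 (symmetrization).} Fix $\mathrm{HW}(A)=a$ and $\mathrm{HW}(B)=b$. Define $\alpha_{WT}^{\mathrm{sym}}(a,b)$ as the expected activity when the operand bits are replaced by exchangeable bits with the same Hamming weights. Equivalently, average $\alpha_{WT}$ over uniformly random permutations of bit positions within each operand. By construction this is a function of $(a,b)$ alone. Set $\Delta_\rho=\alpha_{WT}-\alpha_{WT}^{\mathrm{sym}}$. When $\rho=0$ the partial-product expectations $E[p_{ij}]=\Pr(a_i=1)\Pr(b_j=1)$ factor. The partial-product matrix, written as a column-density vector $\mathbf{v}^{(0)}$, then differs from its density-only component only through the placement of the set bits. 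Lemma~\ref{lem:spectral} contracts this placement term by $3/4$ per stage, so over the $\Theta(\log_{3/2} n)$ stages it shrinks to $O(n^{\log_{3/2}(3/4)})=o(1/n)$. We absorb this into the symmetric term (or the bound).

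\textbf{Step 2 (correlation perturbation).} For $\rho>0$, write $\mathbf{v}^{(0)}=\mathbf{v}^{(0)}_{\mathrm{sym}}+\Delta\mathbf{v}^{(0)}$. The correlation contribution to column $j$ is $\sum_{i+i'=j}\delta_{ii'}$. Each $|\delta_{ii'}|\le\rho/n$ by Definition~\ref{assm:weak}, so after normalizing by the $n^2$ partial products we get $\|\Delta\mathbf{v}^{(0)}\|_1\le n^2\cdot(\rho/n)/n^2=\rho/n$ in density units.

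\textbf{Step 3 (propagation).} Apply Lemma~\ref{lem:spectral} at each stage to get $\|\Delta\mathbf{v}^{(\ell)}\|_1\le(3/4)^\ell\rho/n$. Next convert density perturbations into transition perturbations stage by stage. The compressor transition probability $T_{\mathrm{FA}}$ from Lemma~\ref{lem:compressor} is Lipschitz in density with slope bounded by $C_g=3/4$ (Proposition~\ref{prop:cg}). Stage $\ell$ contributes at most $(3/4)\|\Delta\mathbf{v}^{(\ell)}\|_1$ times its share of the $|Y|$ nodes. Summing the geometric series $\sum_\ell (3/4)^{\ell+1}$ weighted by node fractions, which sum to at most $1$, gives $|\Delta_\rho|\le\frac{3}{4}\cdot\frac{1}{1-3/4}\cdot\frac{\rho}{n}\cdot\max_\ell(\text{fraction})$. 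We then need the constant to come out $\le 1$. The cleanest route is to note that the node fractions decay like $(2/3)^\ell$, so the weighted sum is $\frac{3}{4}\sum_\ell(1/3)(2/3)^\ell(3/4)^\ell\le 1$.

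\textbf{Main obstacle.} The hard step is Step 3. It needs a rigorous passage from an $\ell_1$ bound on density perturbations to a bound on transition probabilities. That requires Lemma~\ref{lem:compressor}'s influence bound to hold at the perturbed densities, not only at $\rho=0$. My first attempt would linearize $T_{\mathrm{FA}}$ around the symmetric densities and control the remainder with the second-order term $O((\rho/n)^2)$, which is dominated by $\rho/n$ for large $n$. If that fails, I would settle for $|\Delta_\rho|\le C\rho/n$ with an explicit constant $C$.
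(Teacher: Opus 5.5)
The gap is in Step~1, not in Step~3 where you placed the obstacle. At $\rho=0$ the statement asserts $|\Delta_0|\le 0$, i.e.\ $\alpha_{WT}=\alpha_{WT}^{\mathrm{sym}}$ exactly. With $\alpha_{WT}^{\mathrm{sym}}$ defined as a permutation average, $\Delta_0$ is precisely the positional residual. Lemma~\ref{lem:spectral} can only shrink it by a factor $(3/4)^L>0$; it can never make it vanish.

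Your estimate of its size is also wrong. Since $\log_{3/2}(3/4)=-\ln(4/3)/\ln(3/2)\approx-0.71$, you get $n^{\log_{3/2}(3/4)}\approx n^{-0.71}$, which exceeds $1/n$ by a factor $n^{0.29}$ rather than being $o(1/n)$. So the residual cannot go into the bound: it already beats $\rho/n$ whenever $\rho<n^{0.29}$, and the bound is $0$ at $\rho=0$. Putting it into $\alpha_{WT}^{\mathrm{sym}}$ instead would make that term depend on bit positions, which is exactly what the lemma rules out.

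Worse, your own Step~3 accounting undercuts Step~1. Step~3 correctly weights each stage by its share of nodes, with about a third of them at stage $0$. Under that accounting, the positional residual's contribution to $\alpha_{WT}$ is bounded only by a constant times its initial size. The early stages see it essentially undamped, so the contraction buys no decay in $n$ at all.

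The paper never creates this term. It works with the expected partial-product matrix under common bit densities $\mu_A,\mu_B$ and decomposes $P=\mu_A\mu_B\mathbf{1}\mathbf{1}^T+\Delta P$. Since $E[p_{ij}]=\mu_A\mu_B+\delta_{ij}$, the deviation is exactly $\Delta P=(\delta_{ij})$. This vanishes identically at $\rho=0$ and satisfies $\|\Delta P\|_1\le n^2\cdot\rho/n=\rho n$ in general. The paper then pushes $\Delta P$ through $L=O(\log n)$ stages with Lemma~\ref{lem:spectral} and normalises by $|Y|=O(n^2)$.

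If you replace your Step~1 by this decomposition, your Steps~2--3 become a variant of the paper's argument. In one respect yours is more careful: you sum the perturbation over all stages instead of reading $\alpha_{WT}$ off the last-stage norm $\|\Delta\mathbf{v}^{(L)}\|_1$.

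One caution for that step. Lemma~\ref{lem:compressor} bounds the value $T_{\mathrm{FA}}\le 3\rho_\ell(1-\rho_\ell)\le 3/4$, not its slope in $\rho_\ell$; the derivative of $3\rho_\ell(1-\rho_\ell)$ reaches $3$. Using $3/4$ as a Lipschitz constant therefore rests entirely on Proposition~\ref{prop:cg} as stated. With slope $3$ your geometric sum gives $2\rho/n$, and your fallback $C\rho/n$ would not yield the constant $1$ the lemma claims.
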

\begin{proof}
Decompose $P=\mu_A\mu_B\mathbf{1}\mathbf{1}^T+\Delta P$ with
$\|\Delta P\|_1\leq\rho n$. Applying Lemma~\ref{lem:spectral} for $L=O(\log n)$
stages: $\|\Delta\mathbf{v}^{(L)}\|_1\leq(3/4)^L\cdot\rho n=n^{-1}\cdot\rho n=\rho$.
Normalising by $|Y|=O(n^2)$ gives $|\Delta_\rho|\leq\rho/n$, so $C_\Delta=1$.
\end{proof}

Lemma~\ref{lem:decouple} resolves a circularity present in earlier formulations,
which assumed what they needed to prove: that $\alpha_{WT}$ is a function of Hamming
weights. It is now a consequence of the spectral geometry of carry-save reduction,
not a premise.

\subsection{Quantifying the Information Loss}

With the Decoupling Lemma in hand, we can now derive the information-retention bound
directly from first principles.

\begin{theorem}[Information Retention]
\label{thm:etaI}
Under Assumption~\ref{assm:weak} with $\rho=0$, for operand width $n$:
\begin{equation}
\eta_I \geq 1 - C_1\log n/n
\label{eq:etabound}
\end{equation}
for an absolute constant $C_1>0$. The rate $O(\log n/n)$ is tight and is not
collapsed into $O(1/n)$.
\end{theorem}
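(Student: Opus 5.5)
The plan is to pass through the sufficient statistic $(\mathrm{HW}(A),\mathrm{HW}(B))$ and split the loss in $I(A;Y)$ into two parts. The first part is a \emph{positional} loss, controlled by Lemma~\ref{lem:decouple}. The second is a \emph{quantization} loss, incurred when the pair of Hamming weights is collapsed to the scalar $Z=\mathrm{HW}(A)+\mathrm{HW}(B)$ and then to the Bernoulli bit $S$. By the Data Processing Inequality applied along the Markov chain $A\to Z\to S$,
\begin{equation*}
I(A;Y)-I(S;Y)=\bigl[I(A;Y)-I(Z;Y)\bigr]+\bigl[I(Z;Y)-I(S;Y)\bigr].
\end{equation*}
I would bound each bracket relative to $I(A;Y)$ and then divide, which gives $\eta_I\geq 1-(\text{loss}_1+\text{loss}_2)/I(A;Y)$.

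\textbf{Positional loss.} For the first bracket, take $\rho=0$ in Lemma~\ref{lem:decouple}. Then $\alpha_{WT}=\alpha_{WT}^{\mathrm{sym}}(\mathrm{HW}(A),\mathrm{HW}(B))$ exactly. The finite-$n$ residual from Lemma~\ref{lem:spectral} decays like $(3/4)^L$ with $L=\Theta(\log n)$ stages, which is polynomially small in $n$. So $Y$ is, up to this residual, conditionally independent of $A$ given the weight pair, and the first bracket is $O(1/n)$.

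\textbf{Collapsing the weight pair to $Z$.} This costs nothing in the symmetric regime already used in Theorem~\ref{thm:monotone}, where $\alpha^{\mathrm{sym}}$ depends on the weights only through $\mu_A\mu_B\approx (Z/2n)^2$. The residual asymmetry contributes $O(1/n)$ by the Lipschitz bound $L_\varphi=3/2$ of Proposition~\ref{prop:cg}.

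\textbf{Normalizer.} For the denominator I would lower-bound $I(A;Y)$ by the entropy of the resolvable activity levels. $Z$ ranges over $2n+1$ values, and under $\rho=0$ it concentrates on a window of width $\Theta(\sqrt n)$. Since $g$ is monotone with slope bounded away from zero away from the endpoints (Theorem~\ref{thm:monotone}), these values map to distinguishable activity levels, giving $I(A;Y)=\Theta(\log n)$.

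\textbf{Main obstacle: the second bracket.} This is the step I expect to be hardest, because it is where both the $\log n$ and the $1/n$ enter. The bit $S_t$ alone carries at most one bit. What the statement actually uses is the windowed statistic $\hat\tau$, which is the sufficient statistic for $p$ under local stationarity. I would therefore treat $S$ as the stream over the window and estimate $I(Z;Y)-I(\hat\tau;Y)$ via the Fisher information of $h(p)=2p(1-p)$. The plan has three parts:
\begin{itemize}
\item The estimation error in $p$ has variance of order $p(1-p)/W$.
\item This error is propagated through $\varphi$ using $L_\varphi=3/2$.
\item The conditional-entropy gap for a discretized Gaussian channel is $\tfrac12\log(1+\mathrm{Var}/\Delta^2)$, where $\Delta=1/(2n)$ is the resolution of $p_t$.
\end{itemize}
Taking $W$ proportional to $n$ should make this gap $O(\log n)$ in absolute terms but only $O(\log n/n)$ after a per-level normalization.

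Getting this normalization consistent with $I(A;Y)=\Theta(\log n)$ is exactly where I am least sure the argument closes. If it does not, the fallback is to bound the ratio directly using a Taylor expansion of $I$ around the stationary point $p$.

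\textbf{Tightness.} Finally, I would exhibit i.i.d.\ Bernoulli$(1/2)$ operands. For them the Gaussian-quantization gap is attained up to constants, so the deficit is $\Omega(\log n/n)$, which shows the rate cannot be improved to $O(1/n)$.
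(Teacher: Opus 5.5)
There is a genuine gap, and it sits exactly at the step you flagged. In the theorem, $\eta_I=I(S;Y)/I(A;Y)$ with $S$ the single proxy bit, so $I(S;Y)\le H(S)\le 1$. Your normalizer estimate is $I(A;Y)=\Theta(\log n)$. Together these give $\eta_I=O(1/\log n)$, which contradicts $\eta_I\ge 1-C_1\log n/n$ for large $n$. So if your normalizer is right, your argument refutes the statement rather than proving it.

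Replacing $S_t$ by the windowed $\hat\tau$ does not help, for three reasons. First, it changes the quantity being bounded. Second, $\hat\tau$ is not a sufficient statistic for $p$: the toggle count has the same law under $p$ and $1-p$, and for an i.i.d.\ Bernoulli stream the sufficient statistic is $\sum_t S_t$. Third, even granting the switch, your own numbers do not close. You get $\mathrm{Var}/\Delta^2\approx 4n^2p(1-p)/W=\Theta(n)$ for $W\propto n$, so the quantization gap $\tfrac12\log(1+\Theta(n))$ is $\Theta(\log n)$ bits. Against a $\Theta(\log n)$ denominator that is a $\Theta(1)$ relative loss. No ``per-level normalization'' is available, because the denominator of $\eta_I$ is fixed as $I(A;Y)$.

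The same numbers mean your tightness example exhibits a constant deficit, not one of order $\log n/n$. Separately, your positional and asymmetry steps turn sup-norm bounds on the activity map (via $L_\varphi$) into mutual-information bounds. That inference is not valid: mutual information is not controlled by small uniform perturbations of a deterministic map.

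The paper's proof is shorter and never uses a window. Its Step 1 is your positional step: $I(A;Y)=I(\mathrm{HW}(A),\mathrm{HW}(B);Y)$ by Lemma~\ref{lem:decouple} at $\rho=0$. It then bounds the loss by the residual ambiguity of the quantized density, $I(\mathrm{HW};Y)-I(S;Y)\le H(p_t\mid S_t)\le\log_2(n+1)$, and divides by $I(\mathrm{HW};Y)=\Omega(1)$. Be aware that this division only yields $\eta_I\ge 1-O(\log n)$, which is vacuous. The factor $1/n$ is not produced there either, and no normalizer can produce it: the weight pair takes at most $(n+1)^2$ values, so $I(\mathrm{HW};Y)\le 2\log_2(n+1)$.

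So the obstruction you hit is real, not a trick you missed. An absolute loss of order $\log n$ bits cannot give a relative loss of $O(\log n/n)$ when the denominator is at most order $\log n$ bits. The paper's own tightness remark even calls that $\log n$ loss irreducible. Neither your route nor the paper's establishes the stated rate for the one-bit proxy.
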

\begin{proof}
\textbf{Step 1.} By Lemma~\ref{lem:decouple} with $\rho=0$:
$I(A;Y)=I(\mathrm{HW}(A),\mathrm{HW}(B);Y)$. This is now a proved consequence.

\textbf{Step 2.} The proxy $S\sim\mathrm{Bernoulli}(p_t)$ quantises $p_t\in\{0,1/(2n),\ldots,1\}$
to $\{0,1\}$. Conditioning on $S_t=1$ leaves $n+1$ possible values of $p_t$, so
$H(p_t|S_t=1)\leq\log_2(n+1)\leq 1+\log_2 n$.

\textbf{Step 3.} The mutual information lost is
$I(\mathrm{HW};Y)-I(S;Y)\leq H(p_t|S_t)\leq\log_2(n+1)$.
Normalising by $I(\mathrm{HW};Y)=\Omega(1)$ gives $\eta_I\geq 1-C_1\log n/n$.
The bound is tight because any 1-bit encoding assigns $\Theta(n)$ density values
to each output, inducing irreducible ambiguity of $\Theta(\log n)$ bits.
\end{proof}

\begin{remark}
Earlier formulations wrote $\eta_I\geq 1-O(1/n)$ by absorbing $\log n$. This is
misleading: for $n=8$ (INT8), $\log_2 n/n = 3/8 = 37.5\%$, which is not negligible.
The honest bound is $1-O(\log n/n)$.
\end{remark}

This result tells us that the single-bit proxy, despite the enormous compression
from $n$ bits to 1, loses very little information about future switching activity.
The reason, now proved rather than assumed, is that switching activity is almost
entirely determined by Hamming weight, and the Bernoulli proxy faithfully encodes
Hamming weight. Whether a different 1-bit encoding could do better is a question
the preceding analysis leaves open. The next section answers it.

\section{Bernoulli Optimality: The Best Possible 1-Bit Encoding}
\label{sec:optimality}

Section~\ref{sec:infotheory} showed that the Bernoulli proxy retains nearly all
predictive information, losing at most $O(\log n/n)$ relative to full observation.
One question the preceding section does not answer is whether the Bernoulli
encoding is the best possible 1-bit representation, or merely a reasonable one.
Could a threshold function, a counter-overflow bit, or some other scheme do better?
The following theorem addresses this, within a well-defined class of encoders.

\begin{definition}[One-Bit Encoder]
A one-bit encoder is any conditional distribution $q(S|Z)$ where $S\in\{0,1\}$
and $Z=\mathrm{HW}(A_t)+\mathrm{HW}(B_t)$. The set of all such encoders is $\mathcal{Q}$.
\end{definition}

\begin{theorem}[Bernoulli Optimality]
\label{thm:optimality}
Under Lemma~\ref{lem:decouple} with $\rho=0$, the Bernoulli encoder
$q^*(S=1|Z=z)=z/(2n)$ uniquely solves:
\begin{equation}
q^* = \operatorname*{arg\,max}_{q\in\mathcal{Q}} I_q(S;Y)
\end{equation}
among all encoders satisfying the calibration constraint $E_q[S]=E[p_t]$.
\end{theorem}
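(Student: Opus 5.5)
The plan is to make $I_q(S;Y)$ an explicit functional of the encoder values $r_z \triangleq q(S=1\mid Z=z)$, $z\in\{0,\dots,2n\}$, and then solve the constrained maximisation by a Lagrangian/KKT argument over the feasible set. The feasible set is the slice of the cube $[0,1]^{2n+1}$ cut out by the single calibration constraint $\sum_z \pi(z) r_z = E[p_t]$, where $\pi$ is the law of $Z_t$.

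\textbf{Step 1: reduce to a Markov chain.} By Lemma~\ref{lem:decouple} with $\rho=0$, $Y$ depends on the operands only through the Hamming weights. Since the encoder sees only $Z$, the chain $S - Z - Y$ is Markov, and I will write $I_q(S;Y)=H_q(S)-H_q(S\mid Y)$. The calibration constraint fixes $\Pr_q(S=1)=E[p_t]$, so $H_q(S)$ equals the constant $h(E[p_t])$ over the whole feasible set. Maximising $I_q(S;Y)$ is therefore equivalent to minimising $H_q(S\mid Y)=\sum_y \Pr(y)\, h\bigl(\sum_z \Pr(z\mid y)\, r_z\bigr)$.

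\textbf{Step 2: fix the structural model for $Y$.} I will use the density model underlying Theorem~\ref{thm:monotone} and Lemma~\ref{lem:compressor}: given $Z=z$, each input-layer node is active independently with probability $z/(2n)$. The observable $Y$ is then a coin whose conditional law is $\mathrm{Bernoulli}(z/(2n))$, or an affine image of such a coin via the map $\varphi$. Under this model $q^*$ is exactly the encoder that makes $S$ a conditionally independent copy of $Y$'s generating coin. I then compute the first variation
\begin{equation}
\frac{\partial I_q}{\partial r_z}=\pi(z)\sum_y \Pr(y\mid z)\log\frac{\Pr_q(y\mid S=1)}{\Pr_q(y\mid S=0)},
\end{equation}
and verify that at $r_z=z/(2n)$ this equals $\pi(z)\lambda$ for a single multiplier $\lambda$. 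This is the KKT stationarity condition for the calibration constraint. The verification uses the fact that $\Pr(y\mid z)$ is affine in $z$, so the log-ratio term is constant in $z$ after reweighting.

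\textbf{Step 3 (main obstacle): global optimality and uniqueness.} Since $I_q(S;Y)$ is convex in the channel $q$ for a fixed input law, stationarity alone does not certify a maximum. A convex function on a polytope is maximised at vertices, which are deterministic threshold encoders. To rule these out I would first try to exploit the conditionally i.i.d. structure at $q^*$: show that for any feasible $q$, $H_q(S\mid Y)\ge H_{q^*}(S\mid Y)$. I would attempt this by writing $\Pr_q(S=1\mid y)$ as a $\Pr(\cdot\mid y)$-average of $r_z$ and applying a second-order (Jensen-type) comparison between the posterior spread under $r$ and under the linear profile, using the calibration constraint to cancel first-order terms. Uniqueness would follow from strict concavity of $h$ together with a strictness condition: any $r\neq r^*$ with the same mean must change some posterior $\Pr_q(S=1\mid y)$ on a set of $y$ of positive probability.

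If the Jensen comparison runs the wrong way, which convexity suggests it may for threshold encoders, I expect this step to fail. In that case I would restrict the class of admissible encoders further, as the theorem's phrase ``calibrated one-bit encoders of Hamming-weight statistics'' may intend. This is the step I expect to carry essentially all of the difficulty.
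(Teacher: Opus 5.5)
Your Step~3 cannot be completed, and the obstruction you anticipate is real. Under the natural reading of calibration (fixed mean $E_q[S]$), the theorem is false, so no Jensen-type comparison will rescue it.

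Your Step~2 already fails on its own terms. With $\Pr(Y=1\mid z)=z/(2n)$, the bracket in your first variation equals $L_0+\frac{z}{2n}(L_1-L_0)$, where $L_y=\log\frac{\Pr_q(y\mid S=1)}{\Pr_q(y\mid S=0)}$ does not depend on $z$. Because it is affine in $z$, it is \emph{non}-constant across the interior $z$'s (for $n\ge 2$) unless $S\perp Y$. So $q^*$ is not a KKT point, and a gradient increasing in $z$ is the signature of a threshold optimum.

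A concrete counterexample: take $n=2$ with i.i.d.\ fair bits, so $Z\sim\mathrm{Bin}(4,\tfrac12)$ and $E[p_t]=\tfrac12$.
\begin{itemize}
\item If $Y=f(Z)$ with $f$ injective (the setting of the paper's own Step~1), then $I_q(S;Y)=1-\sum_z\pi(z)h_b(\theta(z))$. The Bernoulli profile $(0,\tfrac14,\tfrac12,\tfrac34,1)$ gives about $0.22$ bits. The calibrated threshold profile $(0,0,\tfrac12,1,1)$ gives $0.625$ bits. Even the strictly increasing calibrated profile $(0,0.1,0.5,0.9,1)$ gives about $0.39$ bits.
\item In your conditionally independent coin model the threshold also wins. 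There $I_q(S;Y)=1-h_b(c)$ with $c=E[Z\theta(Z)]/2$, which is $5/8$ for $q^*$ and $11/16$ for the threshold.
\end{itemize}
More generally, your model makes $I_q$ depend on $\theta$ only through $E[Z\theta(Z)]$. Extremising that at fixed mean is a fractional-knapsack problem, solved by a threshold rule.

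The paper's proof does not supply the missing idea either. It makes your Step~1 reduction to $I_q(S;Z)=h_b(\bar\theta)-\sum_z\pi(z)h_b(\theta(z))$. It then argues that the linear profile ``maximises the spread'' and that non-monotone encoders admit a mean-preserving spread raising $h_b(\bar\theta)$. As you correctly noted, calibration pins $\bar\theta$, so that term cannot move. The only lever is $\sum_z\pi(z)h_b(\theta(z))$, which is strictly concave in $\theta$. It is therefore minimised at vertices of the calibrated polytope: profiles in $\{0,1\}$ except on at most one atom, exactly the near-deterministic encoders you flagged.

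The paper's uniqueness claim, that $z/(2n)$ is the only strictly increasing calibrated profile, is refuted by the example above. So your fallback of restricting the class is the only route to a true statement. However, nothing in the hypotheses provides a restriction short of pointwise calibration $E_q[S\mid Z]=Z/(2n)$, which makes the result tautological.

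There are also two smaller slips in the paper's Step~1. DPI equality needs $S-Y-Z$ (e.g.\ $f$ injective), not $S-Z-Y$. And the Decoupling Lemma makes $Y$ a function of $(\mathrm{HW}(A),\mathrm{HW}(B))$, not of $Z$ alone.
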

\begin{proof}
\textbf{Step 1.} Since $Y=f(Z)$ by Lemma~\ref{lem:decouple}, the Data Processing
Inequality gives $I_q(S;Y)\leq I_q(S;Z)$ with equality when $S\to Z\to Y$ is Markov.
Maximising $I_q(S;Y)$ therefore reduces to maximising $I_q(S;Z)$.

\textbf{Step 2.} Let $\theta(z)=q(S=1|Z=z)$. Then:
$I_q(S;Z)=H(\bar\theta)-\sum_z P(Z=z)h_b(\theta(z))$
where $\bar\theta=E[\theta(Z)]$ and $h_b$ is binary entropy.
This is the Jensen gap of $h_b$, maximised when $\theta(z)$ varies as steeply as
possible across $z$-values.

\textbf{Step 3.} Setting $\theta^*(z)=z/(2n)$ is strictly increasing in $z$, which
maximises the spread of $\theta(Z)$ in first-order stochastic dominance. Any other
calibrated encoder with non-monotone $\theta'$ admits a mean-preserving spread that
increases $H(\bar\theta)$ without changing $\sum_z P(z)h_b(\theta(z))$, so
$I_{q'}\leq I_{q^*}$. Uniqueness holds because $z/(2n)$ is the unique strictly
increasing function satisfying the calibration constraint.
\end{proof}

\begin{corollary}
Within the family of calibrated one-bit encoders of Hamming-weight statistics,
the Bernoulli encoder $q^*(S=1|Z=z) = z/(2n)$ is the unique maximiser of
$I_q(S;Y)$. Extensions to uncalibrated or multi-bit encoders may yield different
optima and are left for future work.
\end{corollary}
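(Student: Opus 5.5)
The plan is to obtain the corollary directly from Theorem~\ref{thm:optimality}, so the work consists of matching the corollary's wording to the hypotheses and conclusion of that theorem. The claim about uncalibrated or multi-bit encoders is not proved but deferred, so only the first sentence needs an argument.

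\textbf{Step 1: identify the two families.} ``One-bit encoders of Hamming-weight statistics'' means conditional distributions $q(S\mid Z)$ with $S\in\{0,1\}$ and $Z=\mathrm{HW}(A_t)+\mathrm{HW}(B_t)$. This is exactly the set $\mathcal{Q}$ of the One-Bit Encoder definition. ``Calibrated'' means the constraint $E_q[S]=E[p_t]$ used in Theorem~\ref{thm:optimality}. Writing $\theta(z)=q(S=1\mid Z=z)$, this constraint reads
\[
\textstyle\sum_z P(Z=z)\,\theta(z)=E[Z]/(2n).
\]
So the corollary's family is
\[
\mathcal{Q}_{\mathrm{cal}}=\{q\in\mathcal{Q}: E_q[S]=E[p_t]\}.
\]

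\textbf{Step 2: check membership and hypotheses.} Next I will verify that $q^*(S=1\mid Z=z)=z/(2n)$ lies in $\mathcal{Q}_{\mathrm{cal}}$. First, $z/(2n)\in[0,1]$ because $0\le z\le 2n$, so $q^*$ is a valid conditional distribution. Second, $E_{q^*}[S]=E[Z/(2n)]=E[p_t]$ by~\eqref{eq:pt}, so $q^*$ is calibrated. The standing hypothesis of Theorem~\ref{thm:optimality}, namely Lemma~\ref{lem:decouple} with $\rho=0$, is the setting implicitly assumed by the corollary. I will state this explicitly so that $Y$ is a function of $Z$.

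\textbf{Step 3: read off the conclusion.} Theorem~\ref{thm:optimality} asserts that $q^*$ is the unique element of $\mathcal{Q}_{\mathrm{cal}}$ attaining
\[
\max_{q\in\mathcal{Q}_{\mathrm{cal}}} I_q(S;Y).
\]
This is verbatim the first sentence of the corollary. The main obstacle is that the corollary inherits whatever strength the theorem's proof has. In particular, its Step~3 reduces the comparison between calibrated encoders to monotonicity of $\theta$ and to the uniqueness of $z/(2n)$ among strictly increasing calibrated maps.

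If I wanted the corollary to stand more independently, I would first check that the reduction $I_q(S;Y)=I_q(S;Z)$ from Step~1 of the theorem holds. It does when $Y=f(Z)$ is injective on the support of $Z$. I would then work with the Jensen-gap formula for $I_q(S;Z)$ restricted to the affine constraint set $\mathcal{Q}_{\mathrm{cal}}$. On that set $H(\bar\theta)$ is fixed, so maximising $I_q(S;Z)$ amounts to minimising $\sum_z P(z)h_b(\theta(z))$.

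Otherwise I will simply cite the theorem. The closing sentence about other encoder classes will be presented as a remark, not a claim.
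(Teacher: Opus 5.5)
\textbf{The gap: the cited theorem is false.} You reduce the corollary to a restatement of Theorem~\ref{thm:optimality}, which is how the paper treats it: no separate proof is given. Your Steps~1--2 (defining $\mathcal{Q}_{\mathrm{cal}}$ and checking $q^*\in\mathcal{Q}_{\mathrm{cal}}$) are fine. The problem is Step~3. The theorem you cite is false, so citing it proves nothing. The fallback you sketch but do not carry out actually refutes the corollary.

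With $\bar\theta=E[Z]/(2n)$ fixed, you must minimise $\Phi(\theta)=\sum_z P(z)\,h_b(\theta(z))$ over the polytope $\{\theta\in[0,1]^{\mathrm{supp}(Z)}:\sum_z P(z)\theta(z)=\bar\theta\}$. Since $h_b$ is strictly concave, $\Phi$ is strictly concave on the support coordinates. It is therefore minimised only at extreme points of this polytope, i.e.\ encoders with $\theta(z)\in\{0,1\}$ for all but at most one $z$. Now suppose $Z$ puts positive mass on two values $z_1,z_2\in\{1,\ldots,2n-1\}$. You can raise $\theta^*(z_1)$ by $\epsilon/P(z_1)$ and lower $\theta^*(z_2)$ by $\epsilon/P(z_2)$, or the reverse, while staying calibrated. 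So $\theta^*(z)=z/(2n)$ is not extreme, and by strict concavity it is strictly suboptimal.

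\textbf{Counterexample.} Take $n=2$, $Z$ uniform on $\{1,3\}$, and $Y=Z$. Then $E[p_t]=1/2$. The Bernoulli encoder has $\theta^*(1)=1/4$ and $\theta^*(3)=3/4$, giving $I_{q^*}(S;Y)=1-h_b(1/4)\approx 0.19$ bits. The calibrated encoder $\theta(1)=0$, $\theta(3)=1$ gives $I(S;Y)=1$ bit, and so does $\theta(1)=1$, $\theta(3)=0$. So $q^*$ is not a maximiser, and the maximiser is not unique.

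\textbf{Where the paper's Step~3 breaks.} First, under calibration $H(\bar\theta)$ is fixed, so a ``mean-preserving spread that increases $H(\bar\theta)$'' cannot exist. Second, $z/(2n)$ is not the unique strictly increasing calibrated map. For every $\lambda\in(0,1]$, $\theta_\lambda(z)=\lambda z/(2n)+(1-\lambda)E[Z]/(2n)$ is strictly increasing and calibrated. In the example these give $I=1-h_b(1/2-\lambda/4)$ on $\{1,3\}$, which keeps growing past $\lambda=1$ up to $\lambda=2$.

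\textbf{The injectivity step is also unsupported.} You rightly tie $I_q(S;Y)=I_q(S;Z)$ to injectivity of $f$. But Lemma~\ref{lem:decouple} does not supply $Y=f(Z)$: it makes $Y$ a function of the pair $(\mathrm{HW}(A),\mathrm{HW}(B))$, not of $Z$. If the relevant $f$ is, say, constant, every encoder gives $I_q(S;Y)=0$ and uniqueness fails trivially.

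So no argument routed through Theorem~\ref{thm:optimality} can establish the first sentence as written. What your Jensen-gap reduction does prove is that the calibrated optimum is attained by an encoder that is deterministic except at a single value of $z$, generally not unique, and generally not $z/(2n)$.
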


For the calibrated one-bit setting studied here, this result provides a theoretical
basis for the SAP proxy construction. In practice, the calibration constraint is
satisfied by the LFSR thresholding used in the hardware implementation, so the
optimality argument applies directly to the proposed architecture.

\section{Sensitivity Analysis: Robustness Under Real Conditions}
\label{sec:sensitivity}

The preceding theorems hold under three assumptions: weak bit independence, local
stationarity, and symmetric operand distributions. Real AI workloads will violate
each to some degree. This section quantifies exactly how much the guarantees degrade,
showing that the framework is not a fair-weather result.

\subsection{Correlated Operands ($\rho > 0$)}
By Lemma~\ref{lem:decouple} with $C_\Delta=1$, the prediction bias is bounded by:
\begin{equation}
|E[\alpha_{WT}]-\varphi(E[\hat\tau])|\leq\epsilon_{\mathrm{bias}}+\rho/n
\end{equation}
For transformer weight tensors with structured sparsity, $\rho\leq O(\sqrt{n})$,
giving vanishing degradation $O(1/\sqrt{n})$. In the worst case ($\rho=n$) the
bias saturates at $O(1)$ but remains calibratable by adjusting the threshold
$\tau_{\mathrm{th}}$.

\subsection{Non-Stationary Activations}
With inter-cycle drift $\beta=\max_t|p_{t+1}-p_t|$, the toggle estimate gains an
additional error bounded by $\beta$. For weight-stationary inference,
$\beta\leq 1/(nW)$, contributing $O(1/(nW))$ additional error, which is negligible in
practice.

\subsection{Asymmetric Operand Distributions}
When $\mu_A\neq\mu_B$, the symmetric proxy overestimates partial-product density
by the AM-GM gap $\Delta_{\mathrm{AM}}=(\sqrt{\mu_A}-\sqrt{\mu_B})^2/2$.
For INT8 quantized networks: $\Delta_{\mathrm{AM}}\leq 0.02$, giving activity
bias $\leq L_\varphi\cdot 0.02=0.03$, within the 5\% tolerance of
Theorem~\ref{thm:concentration}.

\begin{table}[h]
\centering
\caption{Bound Degradation Under Assumption Violations}
\label{tab:sensitivity}
\begin{tabular}{@{}lll@{}}
\toprule
Assumption Violated & Violation Magnitude & Additional Error \\
\midrule
Weak independence & $\rho\leq\sqrt{n}$ & $O(1/\sqrt{n})\to 0$ \\
Weak independence & $\rho=n$ (worst case) & $O(1)$, calibratable \\
Local stationarity & drift $\beta$ & $O(\beta)\leq O(1/(nW))$ \\
Symmetry ($\mu_A\neq\mu_B$) & $\Delta_{\mathrm{AM}}\leq 0.02$ & $\leq 0.03$ \\
\bottomrule
\end{tabular}
\end{table}

Table~\ref{tab:sensitivity} confirms that the guarantees hold under the most
practically relevant violations. The bounds degrade gracefully, never catastrophically.
A separate concern, independent of assumption quality, is what happens on those
cycles when the prediction is simply wrong. That is the Safety Controller's job.

\section{Safety-Controlled Input Isolation}
\label{sec:safety}

Prediction is probabilistic; arithmetic must be deterministic. The Safety Controller
resolves this tension by ensuring that a misprediction can never reach the multiplier
output. It does this by requiring a second, independent, purely deterministic check
before any isolation is allowed. The isolation signal is the logical AND of both:
\begin{equation}
\mathtt{isolate\_en} = \mathtt{SAP}_{\mathtt{low}} \wedge \mathtt{ArchValidity}
\end{equation}

\begin{definition}[Architectural Validity]
\texttt{ArchValidity} is asserted iff at least one deterministic condition holds:
operand stasis ($A_t=A_{t-1}$ and $B_t=B_{t-1}$); zero masking ($A_t=0$ or $B_t=0$);
or compiler-flagged stationary-weight mode.
\end{definition}

\begin{theorem}[Correctness Guarantee]
\label{thm:correctness}
For all legal input sequences: $\mathrm{Output}_{\mathrm{SAP}}=\mathrm{Output}_{\mathrm{Baseline}}$.
\end{theorem}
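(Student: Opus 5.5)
The plan is to argue cycle by cycle, by induction on the cycle index $t$, splitting on the value of $\mathtt{isolate\_en}$. If $\mathtt{isolate\_en}_t=0$, the multiplier inputs are not frozen. The SAP datapath then coincides with the baseline datapath: the same Wallace tree is fed the same $(A_t,B_t)$, so $\mathrm{Output}_{\mathrm{SAP}}(t)=A_tB_t=\mathrm{Output}_{\mathrm{Baseline}}(t)$. If $\mathtt{isolate\_en}_t=1$, the AND structure gives $\mathtt{ArchValidity}_t=1$ regardless of the value of $\mathtt{SAP}_{\mathtt{low}}$. The predictor, and hence Theorems~\ref{thm:monotone}--\ref{thm:optimality}, therefore play no role in correctness. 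The proof uses only the deterministic definition of \texttt{ArchValidity} and never conditions on $\hat\tau$.

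In the isolated case, the SAP unit forwards a held value $R_{t-1}$. The induction hypothesis gives $R_{t-1}=\mathrm{Output}_{\mathrm{Baseline}}(t-1)=A_{t-1}B_{t-1}$. The base case holds because at $t=1$ there is no previous cycle; we take \texttt{ArchValidity} deasserted at reset, which I will state as part of ``legal'' sequences. I then check each disjunct of \texttt{ArchValidity} separately. Under \emph{operand stasis}, $A_t=A_{t-1}$ and $B_t=B_{t-1}$, so $A_tB_t=A_{t-1}B_{t-1}=R_{t-1}$. Under \emph{zero masking}, $A_t=0$ or $B_t=0$ gives $A_tB_t=0$. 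Here the forwarded value must be $0$ rather than $R_{t-1}$, so I will specify that the bypass mux selects a constant zero in this case. Alternatively, the equality holds by stasis if the previous cycle also had a zero operand. Under \emph{compiler-flagged stationary-weight mode}, I will read ``legal input sequence'' as meaning the compiler flag is asserted only when the operands are actually unchanged. This case then reduces to stasis.

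There is one more point to settle: while inputs are frozen, the internal state of the multiplier is stale. When isolation ends at cycle $t$, the multiplier must compute $A_tB_t$ from scratch. This follows because a Wallace tree is purely combinational, so its output depends only on its current inputs. Once the freeze latch releases, the inputs equal $(A_t,B_t)$ and the output is exact within the same cycle under the baseline timing assumptions. Together with the induction step, this gives equality for all $t$.

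I expect the main obstacle to be the third disjunct and the zero-masking case. The theorem is only as strong as the definition of ``legal input sequence''. The compiler flag is not verified in hardware, so correctness there rests on a contract. I would state this explicitly as an assumption: a flag is legal only if it implies stasis. I would also fix the forwarding semantics, either forward $R_{t-1}$ or force $0$, so that every disjunct implies $\mathrm{Output}_{\mathrm{SAP}}(t)=A_tB_t$. With those conventions fixed, the remaining argument is a routine case analysis.
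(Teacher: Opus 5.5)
Your proposal is correct and follows the same route as the paper: a case split over the disjuncts of \texttt{ArchValidity}, with the AND gate ensuring that $\mathtt{SAP}_{\mathtt{low}}$ alone never triggers isolation, so prediction accuracy plays no role in correctness. Your version is more careful than the paper's one-paragraph proof, which never treats the compiler-flag disjunct, asserts $A_tB_t=0$ under zero masking without reconciling this with forwarding the previous output, and omits the cycle induction showing that the held value equals $A_{t-1}B_{t-1}$; the conventions you make explicit (constant-zero bypass, flag-implies-stasis legality, reset base case, combinational release) are exactly what the paper's argument tacitly needs.
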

\begin{proof}
Under operand stasis, $A_t\cdot B_t=\bar{A}\cdot\bar{B}$ by identity. Under zero
masking, $A_t\cdot B_t=0$ by the zero-product property. A probabilistic misprediction
of \texttt{SAP\textsubscript{low}} cannot corrupt output because \texttt{ArchValidity}
verifies correctness through deterministic logic, formally independent of any
probabilistic claim.
\end{proof}

The AND-gate architecture is intentionally conservative: \texttt{SAP\textsubscript{low}}
alone is never sufficient to trigger isolation. The worst a misprediction can do is
fail to save power; it can never produce an incorrect result. This formal separation
between the probabilistic predictor and the deterministic correctness guarantee is
the architectural property that makes SAP safe for precision AI workloads.

\section{Complexity and Energy}

\subsection{Controller Overhead}

SAP requires $O(n)$ logic (popcount adder), $O(\log n)$ (LFSR), and $O(1)$ (toggle
monitor and comparator). Against the Wallace-tree's $O(n^2)$ complexity:
\begin{equation}
\lim_{n\to\infty}\frac{C_{\mathrm{SAP}}}{C_{WT}} = \lim_{n\to\infty}\frac{O(n)}{O(n^2)} = 0
\end{equation}
The controller overhead vanishes asymptotically. In a systolic array where a whole
row shares the same weight operand, the popcount circuit is shared across the row,
reducing overhead by an additional factor of $\sqrt{M}$.

\subsection{Energy Break-Even}

Net energy savings require:
\begin{equation}
(\alpha_{WT}-\alpha_{\mathrm{SAP}})C_{\mathrm{eff}}V^2 fT > E_{\mathrm{controller}}
\end{equation}
giving break-even time $T_{\mathrm{BE}}=E_{\mathrm{controller}}/[(\alpha_{WT}-\alpha_{\mathrm{SAP}})C_{\mathrm{eff}}V^2 f]$.
For weight-stationary inference the activity differential is sustained across thousands
of consecutive cycles, ensuring $T\gg T_{\mathrm{BE}}$ rapidly. The probability that
the predicted differential is zero when the true differential is $\geq 5\%$ is bounded
by $10^{-13}$ for $W=256$ (Theorem~\ref{thm:concentration}).

\section{SAP in the Tensor Accelerator Power Stack}
\label{sec:bridge}

\subsection{A Taxonomy of Power Dissipation Sources}

The preceding analysis established that SAP works correctly and efficiently for a
single multiplier. The next question is architectural: where does SAP sit within
the complete set of power-reduction techniques available to a tensor accelerator designer?

Power dissipation in a tensor accelerator can be attributed to five structurally
distinct sources. Each source requires a targeted mitigation strategy, and the
strategies operate on orthogonal components of the chip, so all five can be applied
simultaneously with savings that accumulate independently.

\begin{table}[h]
\centering
\begin{tabular}{@{}c p{3.5cm} p{2.7cm}@{}}
\toprule
\textbf{Source} & \textbf{Power dissipation mechanism} & \textbf{Mitigation technique} \\
\midrule
1 & Multipliers toggling on sparse inputs & SAP (this work) \\
2 & Data buses flipping between transfers & Bus-invert coding \\
3 & Idle multipliers drawing leakage current & Fine-grained power gating \\
4 & Register file cells drawing leakage current & Bit-zero biased SRAM \\
5 & Whole array idle between layers & ReGate / DVFS \\
\bottomrule
\end{tabular}
\end{table}

\subsection{Layer 1: SAP (Active Multipliers, Sparse Inputs)}

Source~1 is the power dissipation that SAP targets: a multiplier that is actively
receiving operands, and is therefore not structurally idle, but whose operands carry
so few active bits that most internal switching is wasted. For $a=0010$, $b=0110$: three active bits out of eight, toggle rate $\approx 0.469$,
prediction fires after $W=256$ cycles, multiplier held still. For an $M$-MAC array:
\begin{equation}
E_{\mathrm{saved,total}} = M\cdot E_{\mathrm{saved,SAP}}
\end{equation}
The saving-to-overhead ratio is independent of $M$, so the technique scales cleanly.

\subsection{Layer 2: Bus-Invert (Wires Carrying Data)}

Before operands reach the multiplier, they travel across data buses. Bus-invert
coding monitors the Hamming distance between successive transfers and sends the
bitwise complement when it would reduce bit-flips, capping toggle count below $n/4$
per transfer. This is entirely orthogonal to SAP: bus-invert reduces switching on the
delivery path; SAP reduces switching inside the computation. Both operate simultaneously.

\subsection{Layer 3: Fine-Grained Power Gating (Idle Multipliers)}

When a multiplier has no assigned operands at all, for example during layer transitions or when
the workload is smaller than the array, it still leaks current. PI-SA~\cite{pisa2023}
physically disconnects idle multipliers, achieving up to 57\% power reduction for
the idle fraction.

The boundary with SAP is crisp: PI-SA gates \textbf{the absence of work}; SAP gates
\textbf{the inefficiency of work}. A multiplier receiving $a=0010$, $b=0110$ is not
idle, so PI-SA cannot touch it. SAP can. Their activation sets are always disjoint:
\begin{equation}
P_{\mathrm{saved}} =
\underbrace{|\mathcal{I}|\cdot P_{\mathrm{leak}}}_{\text{PI-SA}}
+\underbrace{|\mathcal{L}|\cdot\Delta\alpha\cdot C_{\mathrm{eff}}V^2 f}_{\text{SAP}}
\end{equation}
where $\mathcal{I}\cap\mathcal{L}=\emptyset$ always.

\subsection{Layer 4: Bit-Zero SRAM (Register File Leakage)}

Register file cells leak more current when storing 1 than 0 in standard CMOS.
Recent work~\cite{lptc2024} exploited the zero-bias of quantized AI data to reduce
register leakage by up to $49\times$ using preferred-zero SRAM cells. This is
orthogonal to SAP: it targets storage while SAP targets computation, but both exploit
the same underlying property of quantized AI data (low bit density). A chip
implementing both gets independent savings from each.

\subsection{Layer 5: ReGate and DVFS (Array-Level Idle Periods)}

At the coarsest granularity, ReGate~\cite{regate2025} power-gates entire matrix
units during sustained layer-level underutilisation. DVFS adjusts voltage and
frequency at workload boundaries. Both operate at millisecond timescales.
SAP operates at nanosecond timescales, inside the active windows that ReGate leaves
running. When ReGate decides a unit is active, SAP catches cycle-level waste within
that window. DVFS fixes $(V,f)$; SAP then reduces $\alpha$, the only remaining
degree of freedom. The nesting is clean: SAP's savings are additive on top of
whatever coarser techniques are already applied.

\subsection{Why Tensor Workloads Natively Satisfy SAP's Assumptions}

SAP's theoretical guarantees rest on three assumptions. In a tensor accelerator
running neural network inference, all three are structural consequences of the
execution model, not approximations the designer must hope for.

\textbf{Sparsity:} INT8 quantization concentrates weight values near zero. A weight
of $+2$ is $00000010$ in binary: seven zeros, one one. The Hamming-weight predictor
sees low density and fires correctly, most of the time.

\textbf{Stability:} In weight-stationary execution, one operand is literally frozen
for the duration of a layer. The other operand, the activation, varies slowly
across neighbouring pixels or tokens. Total Hamming weight changes slowly, keeping
$\beta\leq 1/(nW)$.

\textbf{Long windows:} A $16\times 16$ activation tile gives $W=256$ correlated
cycles per multiplier before the tile boundary. This recovers the $\Pr(|\epsilon|\geq 0.075)<10^{-13}$
bound exactly.

\subsection{The Complete Picture}

\begin{table}[h]
\centering
\caption{Complete Tensor Accelerator Power Reduction Stack}
\label{tab:power_stack}
\begin{tabular}{@{}p{1.6cm}p{2.2cm}p{2.0cm}p{1.4cm}@{}}
\toprule
Layer & Power Source & Technique & SAP Role \\
\midrule
1 & MAC switching & SAP (this work) & Primary \\
2 & Bus switching & Bus-invert & Orthogonal \\
3 & MAC leakage (idle) & PI-SA & Disjoint \\
4 & Register leakage & Bit-zero SRAM & Synergistic \\
5 & Array-level idle & ReGate / DVFS & Nested \\
\bottomrule
\end{tabular}
\end{table}

Table~\ref{tab:power_stack} makes the key architectural point: \textbf{Layer~1 was
the only layer without a solution before SAP}. Every other power source had an
established technique. The specific gap of active, non-zero, informationally idle
multipliers is what SAP is designed to close, and what the preceding theoretical
apparatus formally proves it closes correctly.

\section{Discussion}

\subsection{Limitations}
The primary limitation is the absence of empirical validation. The constants $C_1$
in Theorem~\ref{thm:etaI} and $C_g$ in Proposition~\ref{prop:cg} have explicit
forms but require gate-level simulation to evaluate numerically.
Theorem~\ref{thm:optimality} establishes Bernoulli optimality within the calibrated
family $\mathcal{Q}$; extensions to uncalibrated encoders or multi-bit representations
remain open. RTL synthesis and ASIC tape-out at TSMC 7~nm are immediate future work,
for which the bounds derived here provide quantitative design targets.

\subsection{Contribution Summary}

\begin{table}[h]
\centering
\caption{Contribution Progression}
\label{tab:comparison}
\begin{tabular}{@{}p{3.0cm}p{1.8cm}p{1.8cm}@{}}
\toprule
Contribution & Prior State & This Work \\
\midrule
Lipschitz constant & Asserted $O(1)$ & Proved $3/2$ \\
Geometric decay & Heuristic & Spectral proof \\
$\eta_I$ circularity & Present & Removed \\
$\eta_I$ bound & $1-O(1/n)$ & $1-O(\log n/n)$ \\
Bernoulli optimality & Absent & Proved \\
Tensor power stack & Implicit & Formal \\
\bottomrule
\end{tabular}
\end{table}

\section{Conclusion}

This paper began with a simple observation: tensor accelerator multipliers burn power
on sparse inputs even when those inputs require almost no internal work. It ends with
a theoretical framework grounded in formal proofs that this waste can be
predicted and stopped before it occurs, though empirical validation remains
an important open task.

The argument is a chain. The Spectral Contraction Lemma proves that Wallace-tree
activity depends on Hamming weight, not bit position, making the density proxy
rigorously justified. The Lipschitz bound and Concentration Theorem prove the proxy
tracks true activity within measurable, vanishingly small error bounds. The
Information Retention Theorem proves the single-bit proxy captures nearly all
predictive information, losing only $O(\log n/n)$. The Bernoulli Optimality Theorem
establishes that, within the calibrated encoder family, the chosen encoding
is optimal; whether other encoder families yield further improvement is an open question. The Safety Controller decouples correctness from
prediction, ensuring exact arithmetic regardless of prediction accuracy. And the
five-layer power stack shows precisely where this mechanism sits within a complete
tensor accelerator, addressing the active-but-sparse regime that prior
techniques do not cover.

The resulting framework offers correctness guarantees by construction, theoretical
support for the encoding choice, and structural alignment with AI inference workloads.
The extent to which these theoretical properties translate to practical power savings
remains to be quantified through RTL simulation and silicon measurement. Empirical validation through RTL simulation and ASIC synthesis remains as
immediate future work.

\bibliographystyle{IEEEtran}

\end{document}